\newtheorem{theorem}{Theorem}
\newtheorem{lemma}[theorem]{Lemma}
\newenvironment{proof}[1][Proof]{\begin{trivlist}
\item[\hskip \labelsep {\bfseries #1}]}{\end{trivlist}}
\newenvironment{definition}[1][Definition]{\begin{trivlist}
\item[\hskip \labelsep {\bfseries #1}]}{\end{trivlist}}
\begin{document}
\markboth{IEEE JOURNAL ON SELECTED AREAS IN COMMUNICATIONS, VOL. 33,
NO. X, Second Quarter, 2015} {Peng et al: Heterogeneous Cloud Radio
Access Networks\ldots}

\title{Contract-Based Interference Coordination in Heterogeneous Cloud Radio Access Networks
\author{Mugen~Peng,~\IEEEmembership{Senior~Member,~IEEE}, Xinqian Xie, Qiang Hu, Jie Zhang~\IEEEmembership{Member,~IEEE}, and H. Vincent Poor,~\IEEEmembership{Fellow,~IEEE}
\thanks{Manuscript received Jul. 21, 2013; revised Dec. 15, 2014; accepted
Feb. 19, 2015. The work of M. Peng, X. Xie, and Q. Hu was supported
in part by the National Natural Science Foundation of China under
Grant 61222103 and 61361166005, the National High Technology
Research and Development Program of China under Grant 2014AA01A701,
and the Beijing Natural Science Foundation under Grant 4131003. The
work of H. V. Poor was supported in part by the U.S. National
Science Foundation under Grant ECCS-1343210. Corresponding Author:
Mugen Peng, Tel.: +86-0-13911201899, Fax: +86-10-62283385.}
\thanks{ Mugen Peng (e-mail: pmg@bupt.edu.cn), Xinqian Xie (e-mail: xxmbupt@gmail.com), and Qiang Hu
(e-mail: hq760001570@gmail.com) are with the Key Laboratory of
Universal Wireless Communications, Ministry of Education, Beijing
University of Posts \text{\&} Telecommunications, Beijing, China.
Jie Zhang (e-mail: jie.zhang@sheffield.ac.uk) is with the Electronic
and Electrical Engineering Department at the University of
Sheffield, Sheffield, UK. H. Vincent Poor (e-mail:
poor@princeton.edu) is with the School of Engineering and Applied
Science, Princeton University, Princeton, NJ, USA.}}}

\maketitle

\begin{abstract}

Heterogeneous cloud radio access networks (H-CRANs) are potential
solutions to improve both spectral and energy efficiencies by
embedding cloud computing into heterogeneous networks (HetNets). The
interference among remote radio heads (RRHs) can be suppressed with
centralized cooperative processing in the base band unit (BBU) pool,
while the inter-tier interference between RRHs and macro base
stations (MBSs) is still challenging in H-CRANs. In this paper, to
mitigate this inter-tier interference, a contract-based interference
coordination framework is proposed, where three scheduling schemes
are involved, and the downlink transmission interval is divided into
three phases accordingly. The core idea of the proposed framework is
that the BBU pool covering all RRHs is selected as the principal
that would offer a contract to the MBS, and the MBS as the agent
decides whether to accept the contract or not according to an
individual rational constraint. An optimal contract design that
maximizes the rate-based utility is derived when perfect channel
state information (CSI) is acquired at both principal and agent.
Furthermore, contract optimization under the situation where only
the partial CSI can be obtained from practical channel estimation is
addressed as well. Monte Carlo simulations are provided to confirm
the analysis, and simulation results show that the proposed
framework can significantly increase the transmission data rates
over baselines, thus demonstrating the effectiveness of the proposed
contract-based solution.
\end{abstract}
\begin{keywords}
Heterogeneous cloud radio access networks, interference
coordination, contract-based game theory.
\end{keywords}

\section{Introduction}

The demand for high-speed and high-quality data applications is
expected to increase explosively in the next generation cellular
network, also called the fifth generation (5G), along with the rapid
growth of capital expenditure and energy consumption \cite{1}. Since
the traditional third generation (3G) and fourth generation (4G)
cellular networks, originally devised for enlarging the coverage
areas and optimized for the homogenous traffic, are reaching their
limits, heterogeneous networks (HetNets) have attracted intense
interest from both academia and industry \cite{2}. HetNets offer the
advantages of serving dense customer populations in hot spots with
low power nodes (LPNs) (e.g., pico base stations, femto base
stations, small cell base stations, etc.), while providing
ubiquitous coverage with macro base stations (MBSs), and reducing
the energy consumption \cite{3}. Unfortunately, high densities of
LPNs incur severe interference, which restricts performance gains
and commercial applications of HetNets \cite{4}. To suppress
inter-LPN and inter-tier interference, coordinated multi-point
(CoMP) transmission and reception has been presented as one of the
most promising techniques in 4G systems \cite{5}. Though performance
gains of CoMP are significant under ideal assumptions, they degrade
with increasing density of LPNs due to the non-ideal information
exchange and cooperation among LPNs. On the one hand, it has been
demonstrated that the performance of the downlink CoMP with
zero-forcing beamforming (ZFBF) depends heavily on time delay, and
thus the ZFBF design should be fairly conservative \cite{6}.
Further, it has been shown that CoMP ZFBF has no throughput gain
when the overhead channel delay is larger than 60\% of the channel
coherence time. On the other hand, it has been reported in \cite{7}
that the average spectral efficiency (SE) gain of uplink CoMP in
downtown Dresden field trials is only about 20\% with non-ideal
backhaul and distributed cooperation processing located on MBSs.
Hence, novel system architectures and advanced signal processing
techniques are needed to fully realize the potential gains of
HetNets.

Meanwhile, cloud computing technology has emerged as a promising
solution for providing good performance in terms of both SE and
energy efficiency (EE) across software defined wireless
communication networks \cite{8}. By leveraging cloud computing
technologies, the storage and computation originally provided in the
physical layer can be migrated into the ``cloud" to avoid redundant
resource consumption and to achieve the overall optimization of
resource allocation via centralized processing \cite{9}. As an
application of cloud computing to radio access networks, the cloud
radio access network (C-RAN) has been proposed to achieve
large-scale cooperative processing gains, though the constrained
fronthaul link between the remote radio head (RRH) and the baseband
unit (BBU) pool presents a performance bottleneck \cite{10}. Since
the C-RAN is mainly utilized to provide high data rates in hot
spots, real-time voice service and control signalling are not
efficiently supported. In order to avoid the significant signalling
overhead through decoupling the user plane and control plane in
RRHs, the traditional C-RAN must be enhanced and even evolved.

Motivated by the aforementioned challenges of both HetNets and
C-RANs, an advanced solution known as the heterogeneous cloud radio
access network (H-CRAN) is introduced in \cite{11}. The motivation
of H-CRANs is to avoid inter-LPN interference existing in HetNets
through connecting to a ``signal processing cloud'' with high-speed
optical fibers\cite{12}. As such, the baseband datapath processing
as well as the radio resource control for LPNs are moved to the BBU
pool so as to take advantage of cloud computing capabilities, which
can fully exploit the cooperation processing gains, lower operating
expenses, improve SE performance, and decrease energy consumption of
the wireless infrastructure. The simplified LPNs are converted into
RRHs, i.e., only the front radio frequency (RF) and simple symbol
processing functionalities are configured in RRHs, while the other
important baseband physical processing and procedures of the upper
layers are jointly executed in the BBU pool.

In H-CRANs, the control and user planes are decoupled, and the
delivery of control and broadcast signalling is shifted from RRHs to
MBSs, which alleviates the capacity and time delay constraints on
the fronthaul. Therefore, RRHs are mainly used to provide high bit
rates with high EE performances, while the MBS is deployed to
guarantee seamless coverage and deliver the overall control
signalling. With the help of MBSs, unnecessary handover and
re-association can be avoided. The adaptive signaling/control
mechanism between connection-oriented and connectionless modes is
supported in H-CRANs, which can achieve significant overhead savings
in the radio connection/release by moving away from a pure
connection-oriented mechanism.

As shown in Fig. \ref{fig:system model}, the BBU pool is interfaced
to MBSs for coordinating the cross-tier interference between RRHs
and MBSs. The data and control interfaces between the BBU pool and
the MBS are S1 and X2, respectively, which are inherited from
definitions of the $3^{rd}$ generation partnership project (3GPP)
standards. With the help of centralized large-scale cooperative
processing in the BBU pool, the intra-tier interference can be fully
suppressed, while the inter-tier interference between RRHs and MBSs
still present challenges to improving SE and EE performance.
Significant attention has been paid to inter-tier interference
collaboration \cite{13, 13_1} and radio resource cooperative
management techniques \cite{14} to alleviate inter-tier
interference. As these works are focused primarily on HetNets, new
models and methods are needed for H-CRANs.

\begin{figure}[!t!h]
\center
\includegraphics[width=0.45\textwidth]{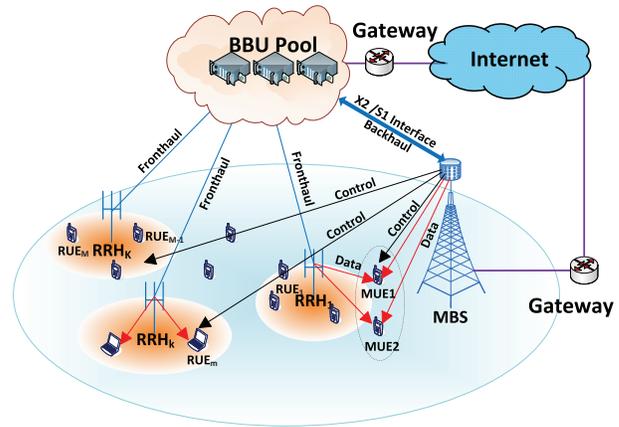}
\caption{System model of the H-CRAN with one MBS and $K$ RRHs.}
\label{fig:system model}
\end{figure}

Intuitively, game theory is a useful tool for examining the
inter-tier interference problem, and the published literature for
HetNets has presented using this approach. For example, a
Stackelberg game model has been used in \cite{15} to design a
distributed radio resource algorithm, and its adaption to
hierarchical problems such as hierarchical power control has been
considered in \cite{16}. More specifically, a Stackelberg game with
a single leader and multiple followers is formulated to study joint
objective optimization of the macrocell and femtocells subject to a
maximum interference power constraint at the MBS in \cite{17}. In
\cite{18}, a hierarchical game framework with multiple leaders and
multiple followers is adopted to investigate uplink power allocation
to mitigate inter-tier interference in two-tier femtocell networks.
In order, for the Stackelberg game, to reach an equilibrium, players
involved in the game have to interact periodically, which leads to
high latency and signalling overhead. Furthermore, the incentive in
the Stackelberg game model is ineffective if the information is
asymmetric between the leader and follower. To overcome these
issues, an advanced game model with proper incentive mechanisms and
regarding the limited overhead is preferred.

To overcome the aforementioned problems in the Stackelberg model,
contract-based game theory has recently been applied to aspects of
spectrum sharing and relay selection. In \cite{19}, to solve the
cooperative spectrum sharing problem under incomplete channel state
information (CSI) in cognitive radio networks, a resource-exchange
scheme is proposed and the corresponding optimal contract model is
designed. The problem of spectrum trading with a single primary user
selling its idle spectrum to multiple secondary users is
investigated in \cite{20}, where a money-exchange based contract is
offered by the primary user containing a set of different items each
intended for a specific consumer type. In \cite{21}, under an
asymmetric information scenario where the source is not well
informed about the CSI of potential relays, a contract based game
theory model is designed to help the source select relays for
optimizing the throughput. Motivated by these existing works, since
the cooperative processing capabilities and acquired CSI are
significantly different between the BBU pool and the MBS in H-CRANs,
contract-based game theory is a promising technique for modeling the
inter-tier interference coordination mechanism under information
asymmetry.

Note that ideal inter-tier interference coordination depends
critically on having accurate CSI for all transmitters and receivers
\cite{22}; however, this assumption is non-trivial especially in
view of the overwhelming signalling overhead and the inevitable
existence of CSI distortion in practical systems \cite{23}. In
previous studies, CSI errors are typically modeled as zero-mean
complex Gaussian random variables without taking the practical
training process and channel estimation methods into account
\cite{24}. In order to present a realistic scenario, not only the
impact of imperfect CSI but also the corresponding training based
channel estimation technique should be considered when examining
contract-based interference coordination in H-CRANs.

In this paper, we propose a contract-based interference coordination
framework to mitigate the inter-tier interference between RRHs and
MBSs in H-CRANs. Based on the proposed framework, different
scheduling schemes can be utilized jointly to achieve performance
gains. Specifically, three scheduling schemes are embedded into the
proposed framework in this paper, and the downlink transmission is
divided into three phases accordingly. At the first phase, RRHs help
MBSs serve macro-cell user equipments (MUEs), which is termed the
\emph{RRH-alone with UEs-all} scheme. At the second phase, MBSs let
RRHs use all radio resources to serve the associated UEs (denoted by
RUEs), which is called the \emph{RRH-alone with RUEs-only} scheme.
Then at the third phase, RRHs and MBSs serve their own UEs
separately, namely the \emph{RRH-MBS with UEs-separated} scheme.
Note that these three schemes are examples merely used to show the
principles of the proposed contract-based inter-tier interference
coordination framework, and other inter-tier interference
coordination schemes can be embedded into this framework, or replace
these three schemes. Through the contract-based interference
coordination framework, RRHs and MBSs establish a mutually
beneficial relationship, which can be effectively modeled and
optimized by using contract-based game theory. The main
contributions of this paper are summarized as follows:

\begin{itemize}
\item To coordinate the inter-LPN interference in HetNets, and to compensate for the substantial fronthaul overhead and processing in conventional C-RAN,
the H-CRAN is considered in this paper, which incorporates
advantages of both the HetNet and C-RAN. The H-CRAN decouples the
control and user planes, and thus RRHs are mainly used to provide
high data rates via cloud computing to suppress the intra-tier
interference, while the MBS is deployed to guarantee seamless
coverage and deliver all control and broadcast signals.
\item Since the cooperative processing capability and acquired CSI between the BBU pool and MBS are asymmetric, a contract-based interference coordination framework is proposed to coordinate the inter-tier interference between RRHs and MBSs, where the BBU pool and the MBS are
modeled as the principal and agent, respectively. Based on the
proposed coordination framework, three scheduling schemes, namely
\emph{RRH-alone with UEs-all}, \emph{RRH-alone with RUEs-only}, and
\emph{RRH-MBS with UEs-separated}, are adaptively utilized.
Accordingly, the downlink transmission in each interval is divided
into three phases.
\item With complete CSI, i.e., the BBU pool can acquire the perfect global CSI, sufficient and necessary
conditions for a feasible contract are presented. An contract design
that maximizes a rate-based utility is derived to achieve the
optimized transmission duration for these three phases and to obtain
the optimized received power allocation for RUEs and MUEs.
Theoretical analysis indicates that the optimal rate-based utility
is achieved when the time duration of the third phase is zero, i.e.,
the \emph{RRH-MBS with UEs-separated} scheme is not triggered due to
the large inter-tier interference even though the fairness power
control algorithm is applied.
\item Considering the imperfect CSI acquired in piratical H-CRANs, training and channel estimation schemes are presented to obtain partial CSI.
Based on the estimated partial CSI, contract design and optimization
for the presented framework are addressed. Specifically, sufficient
and necessary conditions for a feasible contract under partial CSI
are presented, which are individually rational and incentive
compatible. In addition, the optimization problem to determine the
contract that maximizes the rate-based utility is formulated.
Moreover, an optimal contract design is derived when the number of
RRHs is large.
\end{itemize}

The rest of this paper is organized as follows: Section II describes
the system model and the proposed interference coordination
framework. In Section III, an optimal contract design with ideal and
complete CSI is presented. Then in Section IV, contract design with
practical channel estimation is considered. Numerical results are
shown in Section V, followed by conclusions in Section VI.

\emph{Notation}: The transpose, inverse, and Hermitian of matrices
are denoted by $\left(\cdot\right)^{T}$, $\left(\cdot\right)^{-1}$,
and $\left(\cdot\right)^{H}$, respectively. $\|\cdot\|$ denotes the
two-norm of vectors.
$\mathbf{diag}\left(e_{1},e_{2},\ldots,e_{N}\right)$ is the $N\times
N$ diagonal matrix. $\mathbf{I}_{K}$ represents the $K\times K$
dimensional unitary diagonal matrix. $\mathcal{E}\{\cdot\}$ denotes
the expectation of random variables and $\dot{\left(\cdot\right)}$
denotes the first order derivative. For convenience, the
abbreviations are listed in Table \ref{table1}.

\begin{table}\label{table1}
\center \caption{Summary of Abbreviations}
\begin{tabular}{l l}\hline
AWGN & additive white Gaussian noise\\
BBU & base band unit\\
cdf & cumulative density function\\
CICF & contract-based interference coordination framework\\
CoMP & coordinated multi-point\\
C-RAN & cloud radio access network\\
CSI & channel state information\\
EE & energy efficiency\\
FRPC & frequency reuse with power control\\
H-CRAN & heterogeneous cloud radio access network\\
HetNet    & heterogeneous network\\
IC & incentive compatible\\
IR & individual rational\\
KKT & Karush-Kuhn-Tucker\\
LPN & low power node\\
LSE & least-squares estimate\\
MBS & macro base station\\
MSE & mean-square error\\
MUE & macro-cell user equipment\\
OFDMA & orthogonal frequency division multiple access\\
pdf  & probability density function\\
RB & resource block\\
RF & radio frequency\\
RRH & remote radio head\\
RUE & remote radio head served user equipment\\
SE & spectral efficiency\\
SNR & signal-to-noise ratio\\
TDD & time division duplex\\
TDIC & time domain interference cancelation\\
TTI & time transmission interval\\
UE & user equipment\\
ZFBF & zero-forcing beamforming\\\hline
\end{tabular}
\end{table}

\section{System Model and Interference Coordinated Framework}

The discussed H-CRAN consists of one BBU pool, $K$ RRHs and one MBS,
in which each RRH connects with the BBU pool through an ideal
optical fiber, and the MBS connects with the BBU pool ideally with
no constraints. It is assumed that all $K$ RRHs cooperate, and they
share the same radio resources with the MBS. The radio resources are
allocated to different MUEs using orthogonal frequency division
multiple access (OFDMA). It is assumed that each node has a single
antenna, and thus for each resource block (RB) in the OFDMA based
H-CRAN, only one MUE will be served by the MBS, and at most $K$ RUEs
will be associated with the $K$ RRHs. Without loss of generality,
$M$ ($M<K$) RUEs and one MUE can share the same RB in OFDMA based
H-CRAN systems. When $M\!\geq\!K$, $K$ RUEs are selected to be
auto-scheduled \cite{26}. We note that a number of synchronization
solutions for distributed antenna and virtual MIMO systems, e.g.,
the protocol in \cite{28} and post-facto synchronization \cite{29},
can be effectively applied to accomplish the synchronization among
RRHs. Thus we assume that all RRHs and the MBS can be perfectly
synchronized.

Downlink transmission is considered, in which the BBU pool sends
individual data flows to RUEs via RRHs, and the MBS transmits the
data flows to MUEs. The radio channel matrix between RRHs and RUEs
is denoted by $\mathbf{G}$ whose $\left(m,k\right)$-th entry
$g_{mk}$ represents the channel coefficient between the $k$-th RRH
and the $m$-th RUE. The channel coefficient between the MBS and MUE
is denoted by $g_{B}$. Let $f_{kM}$ denote the radio channel
coefficient between the $k$-th RRH and MUE, and $f_{Bm}$ represent
the channel coefficient between the $m$-th RUE and MBS. It is
assumed that all channels are quasi-static flat fading so that they
remain constant within one transmission block but may vary from one
block to another. We further define $s_{m}$ as the symbol
transmitted from the BBU pool to the $m$-th RUE, and $s_{M\!+\!1}$
as the symbol from the MBS to MUE. It is assumed that distributed
precoding is used at all $K$ RRHs, and the $K\!\times\!M$ precoding
matrix is denoted as $\mathbf{A}$. The observations at all $M$ RUEs
can be expressed in vector form as
\begin{align}\label{EQ1}
{{\bf{y}}_R} = {\bf{GAs}}\, + \underbrace {{{\bf{f}}_B}{s_{M + 1}}}_{{\rm{interference}}} + \,{{\bf{n}}_C},
\end{align}
where $\mathbf{s}\!=\!\left[s_{1},s_{2},\ldots,s_{M}\right]^{T}$,
$\mathbf{f}_{B}\!=\!\left[f_{B1},f_{B2},\ldots,f_{BM}\right]^{T}$,
and $\mathbf{n}_{C}$ is an $M\!\times \!1$ dimensional additive
white Gaussian noise (AWGN) vector with each entry having the zero
mean and variance $\sigma_{n}^{2}$. Moreover,
$\mathbf{G}\!=\!\mathbf{D}^{\frac{1}{2}}\mathbf{H}$ models the
independent fast fading and large-scale fading, where $\mathbf{H}$
is an $M\times K$ matrix of fast fading coefficients whose entries
are independent and identically distributed (i.i.d.) zero-mean
complex Gaussian with unity variance, and $\mathbf{D}^{\frac{1}{2}}$
is an $M\times M$ diagonal matrix with
$[\mathbf{D}]_{mm}\!=\!\upsilon_{m}$ known at the BBU pool. The
variance of the $m$-th element in $\mathbf{f}_{B}$ is denoted by
$\upsilon_{Bm}$. The signal received at the MUE is given by
\begin{align}\label{EQ2}
y_B\!=\!g_{B}s_{M\!+\!1}\!
+\!\underbrace{\mathbf{f}_{M}\mathbf{A}\mathbf{s}}_{\mathrm{interference}}\!+\!n_{B},
\end{align}
where $\mathbf{f}_{M}\!=\!\left[f_{1M},f_{2M},\ldots,f_{KM}\right]$
with variances of elements denoted by $\upsilon_{M\!+\!1}$, and
$n_{B}$ is zero-mean Gaussian noise with variance $\sigma_{n}^{2}$.
Eqs. \eqref{EQ1} and \eqref{EQ2} suggest that the received signals
at the RUEs and MUE experience the inter-tier interference, thus
degrading the transmission performance.

To coordinate this inter-tier interference, three scheduling schemes
are considered as examples, namely \emph{RRH-alone with UEs-all},
\emph{RRH-alone with RUEs-only}, and \emph{RRH-MBS with
UEs-separated}. For the \emph{RRH-alone with UEs-all} scheme, both
RUEs and MUEs are served by RRHs, and the MBS keeps silent. For the
\emph{RRH-alone with RUEs-only} scheme, to avoid the interference
and interruption to RUEs, both MBS and MUE keep silent, and only
RRHs and their corresponding serving RUEs operate. While for the
\emph{RRH-MBS with UEs-separated} scheme, both RRHs and the MBS
operate simultaneously, and the interference between RUEs and the
MUE is mitigated by using power control. To adaptively adopt these
three schemes, a contract-based cooperative transmission framework
is presented to coordinate these three scheduling schemes.
Accordingly, as shown in Fig. \ref{fig:framestructure}, there are
three phases for these three scheduling schemes in each time
transmission interval (TTI) with a fixed time length $T_{0}$.

\begin{figure}[!t!h]
\center
\includegraphics[width=0.5\textwidth]{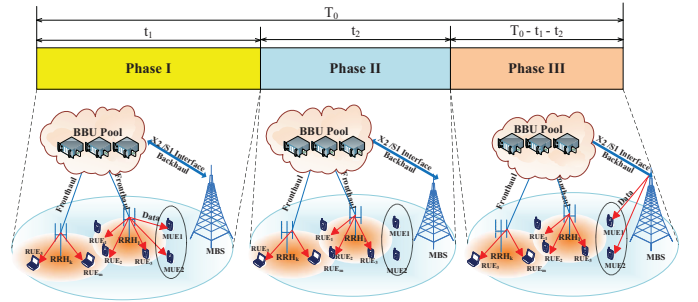}
\caption{Frame structure and scheduling algorithm}
\label{fig:framestructure}
\end{figure}

\begin{itemize}
\item \textbf{Phase I} : During the time period of $\left[0, t_{1}\right]$ in each TTI, the \emph{RRH-alone with UEs-all} scheme operates, where RRHs help the MBS serve the MUE, and the MBS benefits RRHs. Particularly, all $K$ RRHs serve $M$ RUEs and the single MUE in the typical RB, while the MBS remains idle. Since $K\geq M\!+\!1$ is satisfied, the interference between RUEs and the MUE can be suppressed by using distributed precoding in the BBU pool. Note that RRHs consume additional power for serving the MUE, which can be regarded as the payoff of the RRHs.
\item \textbf{Phase II} : During the time period of $\left[t_{1}, t_{1}\!+\!t_{2}\right]$ in each TTI, the \emph{RRH-alone with RUEs-only} scheme operates, where all $K$ RRHs only serve the accessed $M$ RUEs, while the MBS remains idle and the corresponding MUE is not served. Since RRHs help the MBS to serve the MUE in Phase I, the MBS awards RRHs by staying idle at this phase; thus there is no inter-tier interference from the MBS to RUEs.
\item \textbf{Phase III} : During the remaining time period of $\left[t_{1}\!+\!t_{2}, T_{0}\right]$ in each TTI, the \emph{RRH-MBS with UEs-separated} scheme operates, where RRHs and the MBS serve RUEs and the MUE,
    respectively. To mitigate the inter-tier interference, the fairness power control is utilized herein.
\end{itemize}

The H-CRAN architecture can support efficiently the contract-based
interference coordination framework, in which the BBU pool manages
all connected RRHs and accessed RUEs. The inter-tier interference
coordination happens between the BBU pool and the MBS, which avoids
the substantial overhead and capacity requirements between LPNs and
MBSs of HetNets. Note that the advanced inter-tier interference
collaboration scheme \cite{13} and radio resource cooperative
management techniques \cite{14} can be used to replace the
\emph{RRH-MBS with UEs-separated} scheme at Phase III. To highlight
characteristics of the proposed contract-based cooperative
transmission framework and exploit performance gains resulting from
contract-based game theory in H-CRANs, scheduling without
coordinating the severe inter-tier interference is considered in the
Phase III.

\emph{Remark}: The BBU pool and the MBS have conflicting objectives
in the aforementioned three schemes. RRHs want to have a longer
$t_{2}$ because all RUEs can be served without inter-tier
interference during Phase II. Alternatively, the MBS wants a longer
$t_{1}$ because the MUE can be served by RRHs without its own power
consumption. If both RRHs and the MBS can benefit from this
presented transmission framework, they would cooperate with each
other. Otherwise, it would lead to $t_{1}\!=\!t_{2}\!=0$, and ${t_3}
= {T_0}$ which devolves to the traditional transmission scheme
enduring severe inter-tier interference. The core idea of the
proposed framework is that two participants under information
asymmetry, i.e., the BBU pool and MBS, establish a mutually
beneficial relationship, which can be effectively modeled and
optimized by using contract-based game theory.

\section{Optimal Contract Design under Complete CSI}

It is preferred that BBU pool managing on all RRHs and RUEs act as
the principal, due to the powerful capabilities of centralized cloud
computing and large-scale cooperative processing. In the
contract-based game model, the BBU pool as the principal designs a
contract and offers it to the agent. The MBS as the agent decides to
accept or reject the offered contract. Since the BBU pool and MBS
are connected with an X2/S1 interface in H-CRANs, it is possible for
them to exchange the corresponding individual CSI timely and
completely. Thus, we can assume that the BBU pool can acquire the
global and perfect CSI in this section.

\subsection{Rate-based Utility Definition}

A rated-based utility function for the contract is considered in
this paper to represent the SE performance of H-CRANs. Though more
advanced precoding techniques can be utilized to provide better
interference coordination performance with higher implementing
complexities than ZF does, the ZF precoding technique is adopted in
this paper to suppress the inter-RRH interference in the BBU pool
because this common linear precoding can achieve performance close
to the sum capacity bound with a relatively low complexity when the
number of RUEs is sufficiently large in interference-limited H-CRAN
systems \cite{14_Lau}.

\begin{itemize}
\item \textbf{Phase I}: Denote the received symbol power of $s_{m}$ and $s_{M\!+\!1}$ by $P_{C1}$ and $P_{M1}$, respectively. The sum transmission data rate of all RUEs, denoted by $R_{C1}$, and the data rate of the MUE, denoted by $R_{M1}$, can be expressed as
    \begin{align}
    R_{C1}\!&=\!M\log\left(1\!+\!\frac{P_{C1}}{\sigma_{n}^{2}}\right),\\
     R_{M1}\!&=\!\log\left(1\!+\!\frac{P_{M1}}{\sigma_{n}^{2}}\right),
    \end{align}
    respectively. Defining $\mathbf{F}\!=\![\mathbf{G}^{T},\mathbf{f}_{M}^{T}]^{T}$, the expected total transmit power of the RRHs is constrained by
    \begin{align}
    \mathcal{E}\left\{\textup{tr}\left(\mathbf{F}\mathbf{F}^{H}\right)^{-1}\mathbf{\Lambda}
    \right\}\!\leq\!P_{\mathrm{max}},
    \end{align}
    where $P_{\mathrm{max}}$ is the pre-defined allowable maximum transmit power, and $\mathbf{\Lambda}$ is an $(M\!+\!1)\times(M\!+\!1)$ diagonal matrix with the $(M\!+\!1)$-th
    diagonal element given by $P_{M1}$ and other diagonal elements given by $P_{C1}$. The transmit power constraint can be re-written as
    \begin{align}\label{power}
    \textup{tr}\left(\mathbf{\tilde{D}}^{-\frac{1}{2}}\mathcal{E}\left\{(\mathbf{\tilde{H}}
    \mathbf{\tilde{H}}^{H})^{-1}\right\}\mathbf{\tilde{D}}^{-\frac{1}{2}}\mathbf{\Lambda}\right)\!\leq\!P_{\mathrm{max}},
    \end{align}
    where $\mathbf{\tilde{D}}\!=\!\mathbf{diag}\left(\upsilon_{m},\ldots,\upsilon_{M+1}
    \right)$ is an $(M\!+\!1)\times(M\!+\!1)$ diagonal matrix with $\mathbf{\tilde{H}}\!=\![\mathbf{H}^{T},\mathbf{h}_{f}^{T}]^{T}$,
    where $\mathbf{h}_{f}$ is a $1\times K$ Gaussian vector with zero means and unity variances. As $\tilde{\mathbf{H}}\tilde{\mathbf{H}}^{H}\sim\mathcal{W}(M\!+\!1,\mathbf{I}_{M\!+\!1})$ is an $(M\!+\!1)\times(M\!+\!1)$ central complex Wishart matrix with $K(K\!>\!M\!+\!1)$ degrees of freedom, \eqref{power} can be
    written as
    \begin{align}
    \sum_{m=1}^{M}\frac{P_{C1}}{(K\!-\!M\!-\!1)\upsilon_{m}}\!+\!\frac{P_{M1}}{(K\!-\!M\!
    -\!1)\upsilon_{M\!+\!1}}\!\leq\!P_{\mathrm{max}}.
    \end{align}
\item \textbf{Phase II}: Denote the received symbol power of $s_{m}$ by $P_{C2}$, and the sum rate of all RUEs can be expressed
as
    \begin{align}
    R_{C2}\!=\!M\log\left(1\!+\!\frac{P_{C2}}{\sigma_{n}^{2}}\right),
    \end{align}
    where the long-term average transmit power is similarly constrained by
    \begin{align}
    \mathcal{E}\bigg\{P_{C2}\mathrm{tr}\left(\mathbf{G}\mathbf{G}^{H}\right)^{-1}\bigg\}=\sum_{m\!=\!1}^{M}\frac{P_{C2}}{\upsilon_{m}\left(K\!-\!M\right)}\!\leq\!P_{\mathrm{max}}.
    \end{align}

    Since there is no inter-tier interference from the MBS to the desired RUEs due to the silent MUE, $P_{C2}$ would be always set at the maximum value $P_{C2}\!=\!\frac{\left(K\!-\!M\right)
    P_{\mathrm{max}}}{\varepsilon_{1}}$ to optimize the sum rate, where $\varepsilon
    _{1}\!=\!\sum_{m=1}^{M}\frac{1}{\upsilon_{m}}$.

\item \textbf{Phase III}: Denote the transmit power of the MBS and the RRH
by $P_{B}$ and $P_{C3}$, respectively. Under the fairness power
control algorithm\cite{29}, $P_{C3}$ and $P_{B}$ are derived from
    \begin{align}
    \{P_{C3},{P_B}\}\!&=\!\arg \mathop {\max }\limits_{P_{C3},P_B} \min \left\{ {{R_{C3}},{R_{M3}}} \right\},\\
    s.t.\quad &0\!\leq\!P_{C3}\!\leq\!P_{\rm{max}},\thinspace 0\!\leq\!P_{B}\!\leq\!P_{\rm{max}}.
    \end{align}

    The sum rate of all RUEs, denoted by $R_{C3}$, and the data rate of the MUE, denoted by $R_{M3}$, can be expressed as
    \begin{align}
    {R_{C3}} = \sum\limits_{m = 1}^M {\log } \left( {1 + \frac{{{P_{C3}}}}{{|{f_{Bm}}{|^2}{P_B} + \sigma _n^2}}} \right),\\
    {R_{M3}} = \log \left( {1 + \frac{{|{g_B}{|^2}{P_B}}}{{{P_{C3}}{{\bf{f}}_M}{{\bf{G}}^H}{{\left( {{\bf{G}}{{\bf{G}}^H}} \right)}^{ - 2}}{\bf{Gf}}_M^H + \sigma _n^2}}} \right),
    \end{align}
\end{itemize}
respectively.

The rate-based utility of the BBU pool is defined as the sum rate in
each TTI obtained by adopting the proposed framework compared with
the traditional scheme in Phase III, which is given by
\begin{align}\label{EQ6}
{U_C} = {t_1}{R_{C1}} + \left( {{T_0} - {t_1} - {t_3}}
\right){R_{C2}} - \left( {{T_0} - {t_3}} \right){R_{C3}}.
\end{align}

Further, the data rate-based utility of the MBS during each TTI can
be expressed as
\begin{align}
{U_M} = {t_1}{R_{M1}} + {t_3}{R_{M3}}.
\end{align}

\subsection{Contract Design under Perfect CSI}

To make the contract feasible under complete CSI, the principal
should ensure that the agent obtains at least the reservation
utility value that it would get by accepting this contract. Such a
constraint is known as being individual rational (IR), which has the
following definition.

\begin{definition}
(\emph{Individual Rational}) A contract with the item $\left(
{{t_1},{t_3},{P_{M1}},{P_{C1}}} \right)$ is individual rational if
the rate-based utility that the agent obtains is larger than the
reservation utility ${u}$, i.e.,
\begin{align}
{t_1}{R_{M1}} + {t_3}{R_{M3}} \ge u = {T_0}{R_{M3}}.
\end{align}
\end{definition}

Since the principal desires to obtain a maximal profit from the
contract, the optimal contract $\left(
{t_1^*,t_3^*,P_{M1}^*,P_{C1}^*} \right)$ can be obtained from
\begin{equation}
\begin{split}
\left( {t_1^*,t_3^*,P_{M1}^*,P_{C1}^*} \right) & = \arg \mathop
{\max
}\limits_{{t_1},{t_3},{P_{M1}},{P_{C1}}} \bigg\{t_1 R_{C1} + \\
& \left( {{T_0} - {t_1} - {t_3}} \right){R_{C2}} - \left( {{T_0} - {t_3}} \right)R_{C3}\bigg\},\label{eq1}\\
s.t.\quad & {t_1}{R_{M1}} + {t_3}{R_{M3}} \geq {T_0}{R_{M3}},\\
& {t_3} \ge 0,{t_1} \ge 0,{T_0} - {t_1} - {t_3} \ge 0,\\
&
\frac{P_{C1}}{\left(K\!-\!M\!-\!1\right)\varepsilon_{1}}\!+\!\frac{P_{M1}}{\left(K\!-\!M\!-\!1\right)\upsilon_{M\!+\!1}}
\!\le\!P_{\rm{max}}.
\end{split}
\end{equation}

\begin{lemma}
Under complete CSI, the contract $\left(
{t_1^*,t_3^*,P_{M1}^*,P_{C1}^*} \right)$ that maximizes the
rate-based utility of the BBU pool satisfies the following
conditions:
\begin{align}
&t_1^*R_{M1}^* - \left( {{T_0} - t_3^*} \right)R_{M3}^* = 0,\\
&t_3^* = 0,\\
&\frac{P^{*}_{C1}}{\left(K\!-\!M\!-\!1\right)\varepsilon_{1}}\!+\!\frac{P^{*}_{M1}}
{\left(K\!-\!M\!-\!1\right)\upsilon_{M\!+\!1}}\!=\!P_{\mathrm{max}}.
\end{align}
\end{lemma}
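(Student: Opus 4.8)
The plan is to exploit the structure of the objective in \eqref{eq1} by first fixing the time variables $(t_1,t_3)$ and optimizing over the powers $(P_{C1},P_{M1})$, and only afterwards optimizing over $(t_1,t_3)$. This two-stage view makes the first and third conditions almost immediate and isolates where the real work lies.

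First I would argue that at any optimum with $t_1>0$ both the power constraint and the IR constraint are active, which yields the first and third displayed conditions directly. For fixed $(t_1,t_3)$ the powers enter $U_C$ only through $R_{C1}=M\log\left(1+P_{C1}/\sigma_n^2\right)$, which is strictly increasing in $P_{C1}$ and appears with the nonnegative weight $t_1$; hence it is always advantageous to make $P_{C1}$ as large as the budget permits. The IR constraint forces $R_{M1}=\log\left(1+P_{M1}/\sigma_n^2\right)\ge (T_0-t_3)R_{M3}/t_1$, i.e.\ a lower bound $P_{M1}\ge P_{M1}^{\min}(t_1,t_3)$. Since $P_{C1}$ and $P_{M1}$ compete for the same budget, the power spent to satisfy IR should be kept minimal: choosing $P_{M1}=P_{M1}^{\min}$ makes the IR constraint bind (the first condition), and pouring the remaining budget into $P_{C1}$ makes the power constraint bind (the third condition). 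The degenerate case $t_1=0$ forces $t_3\ge T_0$ through IR and collapses to the trivial non-cooperative point, so it may be set aside.

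Next I would prove $t_3^*=0$ by reducing the problem to its remaining degrees of freedom. With the two equality constraints substituted, the binding power relation makes $P_{C1}$ an affine decreasing function of $P_{M1}$, so both $R_{C1}$ and $R_{M1}$ become functions of $P_{M1}$ alone, while binding IR gives $t_1=(T_0-t_3)R_{M3}/R_{M1}$. Writing $\tau=T_0-t_3$ and regrouping terms, the objective simplifies to $U_C=t_1\left(R_{C1}-R_{C2}\right)+\tau\left(R_{C2}-R_{C3}\right)$, and substituting $t_1$ gives the decoupled form
\[
U_C=\tau\,\Phi(P_{M1}),\qquad \Phi(P_{M1})=\frac{R_{M3}\left(R_{C1}-R_{C2}\right)}{R_{M1}}+\left(R_{C2}-R_{C3}\right).
\]
This is the crucial step: the transmission time $\tau$ factors out of a bracket that depends only on $P_{M1}$, so the choices of $P_{M1}$ and of $\tau$ separate cleanly.

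Finally, maximizing $U_C=\tau\,\Phi(P_{M1})$ over $0\le t_1\le\tau\le T_0$ (the middle inequality forcing $R_{M1}\ge R_{M3}$, hence a lower bound on $P_{M1}$), I would first select $P_{M1}$ to maximize $\Phi$ and then, since $U_C$ is linear in $\tau$, push $\tau$ to its largest feasible value $\tau=T_0$, i.e.\ $t_3^*=0$. The main obstacle is precisely the sign of $\Phi$ at its maximizer: because $R_{C1}<R_{C2}$ (Phase~I splits the budget), the first term of $\Phi$ is negative, so positivity rests on the gap $R_{C2}-R_{C3}$ dominating $R_{M3}\left(R_{C2}-R_{C1}\right)/R_{M1}$. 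This is exactly where the \emph{large inter-tier interference} hypothesis enters: severe interference in Phase~III depresses $R_{C3}$, enlarging $R_{C2}-R_{C3}$ so that $\Phi>0$ and the third phase is never triggered. Outside this regime $\Phi\le 0$ would instead drive $\tau\to 0$, recovering the degenerate non-cooperative schedule, so the statement should be read as pertaining to the beneficial, interference-limited operating point that motivates the framework.
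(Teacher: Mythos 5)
Your proposal is correct in substance and reaches all three conditions, but by a genuinely different route than the paper. The paper's Appendix A is a two-block Karush-Kuhn-Tucker argument: for fixed times it forms a Lagrangian in $(P_{C1},P_{M1})$ and deduces the power constraint is tight because the multiplier $\lambda$ must be positive; for fixed powers it writes KKT conditions in the time variables and asserts that the IR multiplier $\varpi$ ``must be larger than zero'' (hence IR binds) and that $t_3^*=0$, the latter justified only by the one-line remark that this maximizes the MUE's rate. You replace the Lagrangian machinery with elementary budget-exchange arguments (IR binds because any excess $P_{M1}$ could be traded for $P_{C1}$, which is the only power entering the objective; the budget constraint binds for the same reason), and then you do something the paper does not: substitute both bindings to factor the objective as $U_C=\tau\,\Phi(P_{M1})$ with $\tau=T_0-t_3$, so that $t_3^*=0$ follows from linearity in $\tau$ whenever $\Phi>0$. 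This factorization is the real gain of your approach, because it exposes the precise condition under which the lemma's second claim holds, namely
\begin{align}
R_{M1}\left(R_{C2}-R_{C3}\right) > R_{M3}\left(R_{C2}-R_{C1}\right),
\end{align}
and this is exactly the trichotomy the paper itself derives later (Lemma 3 and Appendix C) for the large-$K$ imperfect-CSI case; your ``caveat'' is therefore not a defect of your proof but a precision that the paper's own proof of Lemma 1 suppresses, consistent with its remark that otherwise the contract is rejected and $t_1=t_2=0$, $t_3=T_0$. What the paper's KKT route buys in exchange is the stationarity condition in $P_{M1}$ that Algorithm 1's one-dimensional search is built on, which your monotonicity argument does not produce directly. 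One point to tighten: your claim that the degenerate case $t_1=0$ ``may be set aside'' deserves the one extra line you essentially already have, namely that it forces $\tau=0$ and hence $U_C=0$, which is dominated whenever $\Phi>0$ at its maximizer.
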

\begin{proof}
See Appendix A.
\end{proof}

\textit{Remark}: Lemma 1 indicates that the MBS will obtain zero
utility gain from the contract, which is not unexpected since the
BBU pool will maximize its own utility with the minimal payoff.
Moreover, the time period of Phase III would be zero when the
contract is accepted by the MBS. Otherwise, the contract is
rejected.

Once the optimal contract is accepted, the objective function
depends only on $P_{M1}$, and the computational complexity for
obtaining $P^{*}_{M1}$ is acceptable with numerical methods, which
is provided in Algorithm 1. Once $P^{*}_{M1}$ is obtained,
$t_{1}^{*}$ and $P^{*}_{C1}$ can be directly determined.

\begin{table}
\center \caption{Algorithm 1: one-dimensional search for
$P_{M1}^{*}$ under perfect CSI}
\begin{tabular}{p{3.4in}@{}}
\hline
\textbf{Input:} $P_{M1}^{(1)}$, $P_{M1}^{(2)}$, where $P_{M1}^{(1)},
P_{M2}^{(2)}\in[0, (K\!-\!M\!-\!1)\upsilon_{M+1}P_{\rm{max}}]$\\
\quad satisfy $P_{M1}^{(2)}>P_{M1}^{(1)}$,
$\varsigma_{1}=\frac{\partial {U}_{C}} {\partial
P_{M1}}\big|_{P_{M1}=P_{M1}^{(1)}}<0$,\\ \quad and
$\varsigma_{2}=\frac{\partial {U}_{C}} {\partial P_{M1}}\big|_{P_{M1}=P_{M1}^{(2)}}>0$.\\
\textbf{Initialize:} $U_{C}^{(1)}$, $U_{C}^{(2)}$ by substituting $P_{M1}^{(1)}$ and $P_{M1}^{(2)}$ into (14), respectively.\\
\quad $k=1$, error $\epsilon\!>\!0$.\\
\thinspace \textbf{Repeat until convergence:} \\
\enspace Calculate:\\
\enspace
$s=\frac{3[U_{C}^{(2)}-U_{C}^{(1)}]}{P_{M1}^{(2)}-P_{M1}^{(1)}}$,\thinspace$z=s-\varsigma_{1}-\varsigma_{2}$,
\thinspace$w=\sqrt{z^{2}-\varsigma_{1}\varsigma_{2}}$,\thinspace \\
$\bar{P}_{M1}=P_{M1}^{(1)}+(P_{M1}^{(2)}-P_{M1}^{(1)})\left(1-\frac{\varsigma_{2}+w+z}
{\varsigma_{2}-\varsigma_{1}+2w}\right)$.\\
\enspace \textbf{If} $|P_{M1}^{(2)}-P_{M1}^{(1)}|\leq\epsilon$\\
\quad Termination criterion is satisfied.\\
\enspace \textbf{Else}\\
\quad Calculate $\bar{U}_{C}$ and $\varsigma=
\frac{\partial U_{C}}{\partial P_{M1}}\big|_{P_{M1}=\bar{P}_{M1}}$ by substituting $\bar{P}_{M1}$ into (14).\\
\enspace\textbf{If} $\varsigma=0$\\
\quad Termination criterion is satisfied.\\
\enspace\textbf{Else}\\
\quad\textbf{If} $\varsigma<0$\\
\qquad$P_{M1}^{(1)}=\bar{P}_{M1}$, $U_{C}^{(1)}=\bar{U}_{C}$ and $\varsigma_{1}=\varsigma$,\\
\quad\textbf{If} $\varsigma>0$\\
\qquad$P_{M1}^{(2)}=\bar{P}_{M1}$, $U_{C}^{(2)}=\bar{U}_{C}$ and $\varsigma_{2}=\varsigma$,\\
\enspace$k\!=\!k\!+\!1$.\\\hline
\end{tabular}
\end{table}

\emph{Remark}: Algorithm 1 is based on the spline interpolation
method, whose computational complexity is mainly determined by the
number of iterations. In this case, the computational complexity can
be expressed as $O\left(k\right)$, where $k$ denotes the number of
iterations. Note that $k$ depends on the error convergence parameter
$\epsilon$ and the initial configuration of $P_{M1}$ critically;
thus it is important to choose proper initial $P_{M1}$ to promote
the convergence of the algorithm.

Of course, complete and perfect CSI cannot usually be achieved in
practice. In practical H-CRANs, CSI must be estimated and
quantified, and is not error-free or delay-free. In the following
section, we consider the impact of the imperfect CSI on the
contract-based framework, where the principal can only acquire
partial and non-ideal CSI via channel estimation.

\section{Contract Design under Practical Channel Estimation}

In order to enable the BBU pool to acquire the downlink
instantaneous CSI, an uplink training design is considered, which is
based on the channel reciprocity in the time division duplex (TDD)
mode. It is assumed that the fronthaul and backhaul are
non-constrained and ideal, while the radio access links are
non-ideal.

\subsection{Uplink Training Design and Channel Estimation}

Let $\boldsymbol{\psi}_{m}$ denote the training sequence transmitted
from the $m$-th RUE to the corresponding RRH, and
$\boldsymbol{\psi}_{M\!+\!1}$ represents the training sequence from
the MUE to the MBS. All $\boldsymbol{\psi}_{m}$'s have the $N$($>M$)
symbol length. $\|\boldsymbol{\psi}_{m}\|^{2}\!=\!NP_{s}$ and
$\|\boldsymbol{\psi}_{M\!+\!1}\|^{2}\!=\!NP_{b}$, where $P_{s}$ and
$P_{b}$ are the training power of the RUEs and MUE, respectively.
Moreover,
$\mathbf{\Psi}\!=\!\left[\boldsymbol{\psi}_{1},\boldsymbol{\psi}_{2},\ldots,\boldsymbol{\psi}_{M}\right]$
satisfies $\mathbf{\Psi}^{H}\mathbf{\Psi}\!=\!NP_{s}\mathbf{I}_{M}$
and
$\mathbf{\Psi}^{H}\boldsymbol{\psi}_{M\!+\!1}\!=\!\mathbf{0}_{M}$.
During the training stage in the uplink, all RUEs and the MUE
transmit their own training sequences simultaneously, and the
observations at the $k$-th RRH can be expressed as
\begin{align}
\mathbf{x}_{k}\!=\!\sum_{m\!=\!1}^{M}g_{mk}\boldsymbol{\psi}_{m}\!+\!f_{kM}\boldsymbol{\psi}_{M\!+\!1}\!+\!\mathbf{w}_{k},
\end{align}
where $\mathbf{w}_{k}$ is an $N\!\times\!1$ dimensional AWGN vector
with covariance $\sigma_{n}^{2}\mathbf{I}_{N}$. The least-squares
estimate (LSE) of the channel coefficient between the $k$-th RRH and
$m$-th RUE $g_{mk}$, denoted by $\hat{g}_{mk}$, can be expressed as
\begin{align}
\hat{g}_{mk}\!=\!\frac{\boldsymbol{\psi}^{H}_{m}}{\|\boldsymbol{\psi}_{m}\|^{2}}\mathbf{x}_{k},
\end{align}
and the LSE of the channel coefficient between $k$-th RRH and MUE
$f_{kM}$ is
\begin{align}
\hat{f}_{kM}\!=\!\frac{\boldsymbol{\psi}^{H}_{M\!+\!1}}{\|\boldsymbol{\psi}_{M\!+\!1}\|^{2}}\mathbf{x}_{k}.
\end{align}

Similarly, the MBS observes
\begin{align}
\mathbf{x}_{B}\!=\!\boldsymbol{\psi}_{M\!+\!1}g_{B}\!+\sum_{m\!=\!1}^{M}\boldsymbol{\psi}_{m}f_{Bm}\!+\!\mathbf{w}_{B},
\end{align}
where $\mathbf{w}_{B}$ is an $N\!\times\!1$ dimensional AWGN vector
with covariance $\sigma_{n}^{2}\mathbf{I}_{N}$. The LSEs of $g_{B}$
and $f_{Bm}$ are
\begin{align}
\hat{g}_{B}\!=\!\frac{\boldsymbol{\psi}_{M\!+\!1}^{H}}{\|\boldsymbol{\psi}_{M\!+\!1}\|^{2}}\mathbf{x}_{B},\\
\hat{f}_{Bm}\!=\!\frac{\boldsymbol{\psi}_{m}^{H}}{\|\boldsymbol{\psi}_{m}\|^{2}}\mathbf{x}_{B},
\end{align}
respectively. The mean-square errors (MSEs) of $\hat{g}_{mk}$ and
$\hat{f}_{Bm}$, denoted by $\delta_{mk}$ and $\delta_{Bm}$, are
equal and can be written as
\begin{align}
\delta_{mk}\!=\!\delta_{Bm}\!=\frac{\mathcal{E}\left\{\|\boldsymbol{\psi}^{H}_{m}
\mathbf{w}_{k}\|^{2}\right\}}{\|\boldsymbol{\psi}_{m}\|^{2}}
\!=\!\frac{\mathcal{E}\left\{\|\boldsymbol{\psi}^{H}_{M\!+\!1}
\mathbf{w}_{k}\|^{2}\right\}}{\|\boldsymbol{\psi}_{M\!+\!1}\|^{2}}
\!=\!\underbrace{\frac{\sigma_{n}^{2}}{NP_{s}}}_{\delta_{1}}.
\end{align}

Similarly, the MSEs of $\hat{g}_{B}$ and $\hat{f}_{kM}$, denoted by
$\delta_{B}$ and $\delta_{kM}$, can equivalently be written as
\begin{align}
\delta_{B}\!=\!\delta_{kM}\!=\!\underbrace{\frac{\sigma_{n}^{2}}{NP_{b}}}_{\delta_{2}}.
\end{align}

\subsection{Contract Design under Imperfect CSI}

For notational simplicity, we define $\hat{R}_{C1}$, $\hat{R}_{M1}$,
$\hat{R}_{C2}$, $\hat{R}_{C}$, and $\hat{R}_{M}$ under imperfect CSI
to be compatible with the corresponding data rates $R_{C1}$,
$R_{M1}$, $R_{C2}$, $R_{C}$, and $R_{M}$ under complete CSI. These
transmission data rates under imperfect CSI are characterized in the
following lemma.

\begin{lemma}
The transmission data rates for all RUEs and the MUE at the three
phases are given as follows.

\textbf{Phase I}: The sum transmission data rate of all RUEs and the data rate of the MUE are given by

\begin{align}
&\hat{R}_{C1}\!=\!M\nonumber\\&\log\!\left(1\!+\!
\frac{P_{C1}}{\delta_{1}P_{C1}\!\!\displaystyle\sum_{m\!=\!1}^{M}
\!\!\left[\hat{\mathbf{F}}\hat{\mathbf{F}}^{H}\right]^{-1}_{m,m}\!
\!\!+\!\delta_{2}P_{M1}\left[\hat{\mathbf{F}}\hat{\mathbf{F}}^{H}\right]^{-1}_{M\!+\!1,
M\!+\!1}\!+\!\sigma_{n}^{2}}\right),
\end{align}

\begin{align}
&\hat{R}_{M1}\!=\!\nonumber\\&\log\!\left(1\!+\!
\frac{P_{M1}}{\delta_{1}P_{C1}\!\!\displaystyle\sum_{m\!=\!1}^{M}
\!\!\left[\hat{\mathbf{F}}\hat{\mathbf{F}}^{H}\right]^{-1}_{m,m}\!
\!\!+\!\delta_{2}P_{M}\left[\hat{\mathbf{F}}\hat{\mathbf{F}}^{H}\right]^{-1}_{M\!+\!1,
M\!+\!1}\!+\!\sigma_{n}^{2}}\right),
\end{align}
respectively, where the total transmit power is constrained by
\begin{align}
\sum_{m\!=\!1}^{M}\frac{P_{C1}}{\left(K\!-\!M\!-\!1\right)\left(\upsilon_{m}\!+\!\delta_{1}\right)}\!+\!\frac{P_{M1}}
{\left(K\!-\!M\!-\!1\right)\left(\upsilon_{M\!+\!1}\!+\!\delta_{2}\right)}\!\leq\!P_{\mathrm{max}}.
\end{align}

\textbf{Phase II}: The sum rate of all RUEs can be expressed as
\begin{align}
\hat{R}_{C2}\!=\!M\log\left(1\!+\!\frac{P_{C2}}{\delta_{1}P_{C2}\mathrm{tr}\!
\bigg\{\left[\hat{\mathbf{G}}\hat{\mathbf{G}}^{H}\right]^{-1}\bigg\}\!+\!\sigma_{n}^{2}}\right),
\end{align}
where the total transmit power is constrained by
\begin{align}
\sum_{m\!=\!1}^{M}\frac{P_{C2}}{\left(\upsilon_{m}\!+\!\delta_{1}\right)\left(K\!-\!M\right)}\!\leq\!P_{\mathrm{max}}.
\end{align}

Note that RRHs prefer to achieve the maximal sum rate, and thus
$P_{C2}$ would always be set at the maximum value as
$P_{C2}\!=\!\frac{\left(K\!-\!M\right)P_{\mathrm{max}}}{\varepsilon_{2}}$
with
$\varepsilon_{2}\!=\!\sum_{m=1}^{M}\frac{1}{\upsilon_{m}\!+\!\delta_{1}}$.

\textbf{Phase III}: The sum rate of all RUEs and the data rate of
the MUE are given by
\begin{align}
&\hat{R}_{C3}\nonumber\\&\!=\!\sum_{m\!=\!1}^{M}\log\left(1\!+\!\frac{P_{C3}}{\delta_{1}P_{C3}\mathrm{tr}\!
\bigg\{\left[\hat{\mathbf{G}}\hat{\mathbf{G}}^{H}\right]^{-1}\bigg\}\!+\!|f_{Bm}|^{2}P_{B}\!+\!\sigma_{n}^{2}}\right),
\end{align}
\begin{align}\label{RM3}
\hat{R}_{M3}\!=\!\log\left(1\!+\!\frac{|g_{B}|^{2}P_{B}}{P_{C3}\mathbf{f}\hat{\mathbf{G}}^{H}
\left(\hat{\mathbf{G}}\hat{\mathbf{G}}^{H}\right)^{-2}\hat{\mathbf{G}}\mathbf{f}^{H}
\!+\!\sigma_{n}^{2}}\right),
\end{align}
respectively. With the same power control algorithm as under perfect
CSI, $P_{C3}$ and $P_{B}$ under the fairness power control can be
achieved by
\begin{align}\nonumber
\{P_{C3},{P_B}\}\!&=\!\arg \mathop {\max }\limits_{P_{C3},P_B} \min \left\{ {{\hat{R}_{C3}},{\hat{R}_{M3}}} \right\},\\
s.t.\quad &0\!\leq\!P_{C3}\!\leq\!P_{\rm{max}},\thinspace 0\!\leq\!P_{B}\!\leq\!P_{\rm{max}}.
\end{align}
\end{lemma}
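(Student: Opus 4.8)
The plan is to derive each rate by substituting the least-squares channel decomposition into the per-phase received-signal models and then isolating the estimation-error contribution as an additive interference term. From the MSE computations just established, I would write every true coefficient as its estimate plus an independent zero-mean error: $g_{mk}\!=\!\hat{g}_{mk}\!-\!e_{mk}$ and $f_{Bm}\!=\!\hat{f}_{Bm}\!-\!e_{Bm}$ with $\mathcal{E}\{|e_{mk}|^{2}\}\!=\!\mathcal{E}\{|e_{Bm}|^{2}\}\!=\!\delta_{1}$, and $f_{kM}\!=\!\hat{f}_{kM}\!-\!e_{kM}$, $g_{B}\!=\!\hat{g}_{B}\!-\!e_{B}$ with variance $\delta_{2}$. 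In matrix form $\mathbf{F}\!=\!\hat{\mathbf{F}}\!-\!\mathbf{E}$ and $\mathbf{G}\!=\!\hat{\mathbf{G}}\!-\!\mathbf{E}_{G}$, whose error rows carry variance $\delta_{1}$ for the RUE channels and $\delta_{2}$ for the MUE channel. Since the BBU knows only the estimates, the ZF precoder in Phase I is $\mathbf{A}\!=\!\hat{\mathbf{F}}^{H}(\hat{\mathbf{F}}\hat{\mathbf{F}}^{H})^{-1}\mathbf{\Lambda}^{1/2}$, and in Phases II--III it is built from $\hat{\mathbf{G}}$; conditioning on the estimate, the residual $\mathbf{E}\mathbf{A}$ is treated as a zero-mean uncorrelated Gaussian perturbation, which gives the standard worst-case lower bound on the achievable rate.

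For \textbf{Phase I}, the estimate-based effective channel $\hat{\mathbf{F}}\mathbf{A}\!=\!\mathbf{\Lambda}^{1/2}$ is diagonal, so the desired received power of stream $i$ is exactly $[\mathbf{\Lambda}]_{i,i}$, i.e. $P_{C1}$ for an RUE and $P_{M1}$ for the MUE. The leakage seen at receiver $i$ is $\mathbf{e}_{i}\mathbf{A}\mathbf{s}$, whose conditional power follows from $\|\mathbf{a}_{j}\|^{2}\!=\![\mathbf{\Lambda}]_{j,j}[(\hat{\mathbf{F}}\hat{\mathbf{F}}^{H})^{-1}]_{j,j}$, producing a sum over streams of the form (error variance)$\times[\mathbf{\Lambda}]_{j,j}[(\hat{\mathbf{F}}\hat{\mathbf{F}}^{H})^{-1}]_{j,j}$; attaching $\delta_{1}$ to the $M$ RUE streams and $\delta_{2}$ to the MUE stream reproduces the denominators of $\hat{R}_{C1}$ and $\hat{R}_{M1}$. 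For the power constraint I would factor $\hat{\mathbf{F}}\!=\!\tilde{\mathbf{D}}_{F}^{1/2}\tilde{\mathbf{H}}_{F}$ with $\tilde{\mathbf{D}}_{F}\!=\!\mathbf{diag}(\upsilon_{1}\!+\!\delta_{1},\ldots,\upsilon_{M}\!+\!\delta_{1},\upsilon_{M+1}\!+\!\delta_{2})$ and $\tilde{\mathbf{H}}_{F}\tilde{\mathbf{H}}_{F}^{H}\!\sim\!\mathcal{W}(M\!+\!1,\mathbf{I})$ with $K$ degrees of freedom, whence $\mathcal{E}\{[(\hat{\mathbf{F}}\hat{\mathbf{F}}^{H})^{-1}]_{j,j}\}\!=\!\tfrac{1}{(K-M-1)(\tilde{\mathbf{D}}_{F})_{j,j}}$. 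This is exactly the perfect-CSI computation with $\upsilon_{m}$ replaced by $\upsilon_{m}\!+\!\delta_{1}$ and $\upsilon_{M+1}$ by $\upsilon_{M+1}\!+\!\delta_{2}$, yielding the stated inflated constraint.

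\textbf{Phases II and III} follow the same template. Phase II is the single-group case: the ZF precoder on $\hat{\mathbf{G}}$ (an $M\!\times\!M$ Gram with $K-M$ degrees of freedom) leaves the error leakage $\delta_{1}P_{C2}\mathrm{tr}\{(\hat{\mathbf{G}}\hat{\mathbf{G}}^{H})^{-1}\}$, and the same Wishart identity gives the $(\upsilon_{m}\!+\!\delta_{1})(K-M)$ constraint, fixing $P_{C2}$ at its maximum with $\varepsilon_{2}\!=\!\sum_{m}(\upsilon_{m}\!+\!\delta_{1})^{-1}$. In Phase III the MBS also transmits, so two \emph{inherent} inter-tier terms appear that ZF cannot remove: the MBS-to-RUE interference $|f_{Bm}|^{2}P_{B}$ in the denominator of $\hat{R}_{C3}$, and the RRH-to-MUE leakage $P_{C3}\mathbf{f}\hat{\mathbf{G}}^{H}(\hat{\mathbf{G}}\hat{\mathbf{G}}^{H})^{-2}\hat{\mathbf{G}}\mathbf{f}^{H}$ in the denominator of $\hat{R}_{M3}$, obtained exactly as in the perfect-CSI Phase III but with $\mathbf{G}$ replaced by the precoding estimate $\hat{\mathbf{G}}$; adding the $\delta_{1}$ self-leakage on the RUE side completes $\hat{R}_{C3}$. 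The max-min power-control program then carries over verbatim with the hatted rates.

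The hard part will be the interference bookkeeping in Phase I, for two reasons. First, for the LSE the estimate $\hat{\mathbf{F}}$ and its error $\mathbf{E}$ are statistically dependent (the estimate contains the projected noise), so one cannot naively average over $\mathbf{E}$ with $\hat{\mathbf{F}}$ held fixed; the clean route is to condition on the estimate and invoke the worst-case uncorrelated-error bound, charging each residual stream only its own MSE variance. Second, one must verify the quadratic-form simplification $\mathcal{E}_{\mathbf{E}}\{|[\mathbf{E}\hat{\mathbf{F}}^{H}(\hat{\mathbf{F}}\hat{\mathbf{F}}^{H})^{-1}]_{i,j}|^{2}\}\propto[(\hat{\mathbf{F}}\hat{\mathbf{F}}^{H})^{-1}]_{j,j}$, so that all off-diagonal inverse-Gram entries cancel through the identity $\hat{\mathbf{F}}\hat{\mathbf{F}}^{H}(\hat{\mathbf{F}}\hat{\mathbf{F}}^{H})^{-1}\!=\!\mathbf{I}$ and only the diagonal entries (or their trace) survive. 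Once these two points are settled, substituting the inflated Wishart expectation into the power constraints is entirely routine and parallels Section III.
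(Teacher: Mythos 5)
Your overall approach is the paper's own: Appendix B likewise writes the true channel as LSE estimate plus error, applies ZF to the estimates so that the estimated effective channel is diagonal, treats the residual leakage as additive Gaussian noise whose conditional power is the error variance times $\|\mathbf{a}_j\|^{2}=[\mathbf{\Lambda}]_{jj}[(\hat{\mathbf{F}}\hat{\mathbf{F}}^{H})^{-1}]_{jj}$ (the paper invokes the achievable-rate bound of \cite{31}, which is the same device as your worst-case uncorrelated-error bound), and the power constraints follow from the inverse-Wishart expectation with the inflated variances $\upsilon_m+\delta_1$ and $\upsilon_{M+1}+\delta_2$, exactly as you propose. The main difference in coverage is which phase gets the detailed treatment: the paper works out Phase III explicitly and declares Phases I--II ``similar,'' while you work out Phase I and declare II--III routine; your treatment of the power constraints is in fact more explicit than the paper's.

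However, there is a genuine gap at precisely the step you flag as the hard part, and it is worth naming. In Phase I the leakage at receiver $i$ is $\mathbf{e}_{(i)}\mathbf{A}\mathbf{s}$, where $\mathbf{e}_{(i)}$ is the error row of receiver $i$'s own channel estimate; since all entries of that row share one variance, the conditional leakage power is (that single variance)$\times\sum_j[\mathbf{\Lambda}]_{jj}[(\hat{\mathbf{F}}\hat{\mathbf{F}}^{H})^{-1}]_{jj}$. For an RUE this puts $\delta_1$ on every stream term, including the $P_{M1}$ term; for the MUE it puts $\delta_2$ on every term, including the $P_{C1}$ sum. The lemma instead mixes the variances per stream ($\delta_1$ on the RUE-stream sum, $\delta_2$ on the MUE-stream term, in both denominators), and your sentence ``attaching $\delta_1$ to the $M$ RUE streams and $\delta_2$ to the MUE stream'' asserts that per-stream convention rather than deriving it --- it contradicts the per-receiver leakage formula you yourself set up. The paper never confronts this tension, because its explicit computation is confined to Phase III, where every channel involved carries $\delta_1$ and the two conventions coincide; the mixed-$\delta$ Phase I denominators are never actually derived there either. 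So your attempt is no weaker than the paper's proof, but as written it does not reproduce the stated $\hat{R}_{C1}$ and $\hat{R}_{M1}$: closing that step requires either an additional argument for the per-stream attribution or an amendment of the lemma toward the per-receiver formulas that your (and the paper's) method actually yields.
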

\begin{proof}
See Appendix B.
\end{proof}

The BBU pool can acquire the estimated CSI of the RRH-RUE and
RRH-MUE links, and it cannot estimate CSI of the MBS-MUE link, i.e.,
${\left| {{g_B}} \right|^2}$, which results in information
asymmetry. In order to design the optimal contract, an incentive
compatible (IC) constraint is presented according to the revelation
principle \cite{30}. It is assumed that ${\left|{{g_B}} \right|^2}$
has $L$ potential quantified values and the set containing those
values is denoted by $\Xi = \left\{ {{\xi_1},{\xi _2}, \ldots ,{\xi
_L}} \right\}$. Without loss of generality, we further assume that
${\xi _1} < {\xi _2} < \ldots <{\xi _L}$. The IC constraint can be
defined as follows.

\begin{definition}
(\emph{Incentive Compatible}) A contract is incentive compatible if
the agent with channel coefficients of the MBS-MUE link ${\left|
{{g_B}} \right|^2} = {\xi _l}$ prefers to choose the contract item
$(t_1^l,t_3^l,P_{M1},P_{C1})$ designed specifically for its own
type, i.e.,
\begin{align}
t_1^l\hat R_{M1} + t_3^l\hat R_{M3}^l \ge t_1^{l'}\hat R_{M1} +
t_3^{l'}\hat R_{M3}^l,\forall l,l' \in \left\{ {1,2, \ldots ,L}
\right\}.
\end{align}
\end{definition}

On the other hand, the IR constraint under partial CSI is given by
the following definition.

\begin{definition}
(\emph{Individual Rational}) A contract is individual rational if
the rate-based utility that the agent with ${\left| {{g_B}}
\right|^2} = {\xi_l}$ obtains by choosing the contract item
$(t_1^l,t_3^l,P_{M1},P_{C1})$ is larger than the reservation utility
${u_l}$ , i.e.,
\begin{align}
t_1^l\hat R_{M1} + t_3^l\hat R_{M3}^l \ge {u_l} = {T_0}\hat R_{M3}^l,\forall l \in \left\{ {1,2, \ldots ,L} \right\},
\end{align}
where $t_1^l$ and $t_3^l$ are the time duration of Phase I and Phase
III intended for the agent with ${\left| {{g_B}} \right|^2} =
{\xi_l}$, respectively, and $\hat R_{M3}^l$ is calculated by \eqref{RM3}
under ${\left| {{g_B}} \right|^2} = {\xi _l}$.
\end{definition}

Since the BBU pool can obtain the knowledge of the statistical
parameters and the distributions of random variables by using
appropriate learning and fitting methods, we assume that the BBU
pool can acquire the probability density function (pdf) of
$z\!=\!|f_{Bm}|^{2}$, the set ${\Xi} = \left\{ {{\xi _1},{\xi _2},
\ldots ,{\xi _L}} \right\}$, and the variable ${q_l}$ denoting the
possibility of ${\left| {{g_B}} \right|^2} = {\xi_l}$. Obviously,
${q_l}\in\left[ {0,1} \right]$ and $\sum\limits_{l \in \left\{ {1,2,
\ldots ,L} \right\}} {{q_l}} = 1$.

Hence, when the contract is accepted by the MBS with ${\left|
{{g_B}} \right|^2} = {\xi_l}$, the rate-based utility in the BBU
pool can be written as
\begin{align}\label{U}
{U_C} =& t_1^l\hat R_{C1} + \left( {{T_0} - t_1^l - t_3^l}
\right){{\hat R}_{C2}}
 \nonumber\\&- \left( {{T_0} - t_3^l} \right)\int_z p \left( z \right){{\hat R}_{C3}}dz.
\end{align}

Similarly, the rate-based utility in the MBS can be obtained by
\begin{align}
{U_M} = t_1^l\hat R_{M1} + t_3^l\hat R_{M3}^l.
\end{align}

Then the optimal contract with imperfect CSI can be obtained by
solving the following optimization problem:
\begin{align}
&\mathop {\max }\limits_{\left\{ {\left( {t_1^l,t_3^l,{P_{M1}},{P_{C1}}} \right),\forall l \in \left\{ {1,2, \ldots ,L} \right\}} \right\}} \nonumber\\&
{\rm{ }}\sum\limits_{l \in \left\{ {1,2, \ldots ,L} \right\}}\!\! {{q_l}\left[ {t_1^l{{\hat R}_{C1}} + \left( {{T_0} - t_1^l - t_3^l} \right){{\hat R}_{C2}} - \left( {{T_0} - t_3^l} \right){{\hat R}_{C3}}} \right]}   \\
&s.t.\quad  t_1^l\hat R_{M1} + t_3^l\hat R_{M3}^l \ge {T_0}\hat R_{M3}^l,\forall l \!\in\! \left\{ {1,2, \ldots ,L} \right\},\\
& \quad\quad t_1^l\hat R_{M1} + t_3^l\hat R_{M3}^l \ge t_1^{l'}\hat R_{M1} + t_3^{l'}\hat R_{M3}^l,\nonumber\\ &\quad\quad\forall l,l' \in \left\{ {1,2, \ldots ,L} \right\},\\
& \quad\quad t_3^l \ge 0,t_1^l \ge 0,{T_0} - t_1^l - t_3^l \ge 0,\forall l \in \left\{ {1,2, \ldots ,L} \right\},\\
& \quad\quad {P_{M1}} \ge 0,{P_{C1}} \ge 0,\\
& \quad\quad
\frac{P_{C1}}{\left(K\!-\!M\!-\!1\right)\varepsilon_{2}} +
\frac{{{P_{M1}}}}{\left(K - M - 1\right)(\upsilon
_{M+1}+\delta_{2})}\le P_{\rm{max}}.
\end{align}

The above optimization problem can be further transformed to
\begin{align}
&\mathop {\max }\limits_{\left\{ {\left( {t_3^l,{P_{M1}},{P_{C1}}} \right),\forall l \in \left\{ {1,2, \ldots ,L} \right\}} \right\}} \quad U_{{C_{\left\{ {\left( {t_3^l,{P_{M1}},{P_{C1}}} \right),\forall l \in \left\{ {1,2, \ldots ,L} \right\}} \right\}}}}^*\label{contract}\\
&s.t.\quad 0 \le t_3^1 \le t_3^2 \le  \cdots  \le t_3^L,\label{cons1}\\
& \quad\quad {T_0} - \tilde t_1^l - t_3^l \ge 0,\forall l \in \left\{ {1,2, \ldots ,L} \right\},\\
& \quad\quad{P_{M1}} \ge 0,{P_{C1}} \ge 0,\\
& \quad\quad\frac{P_{C1}}{\left(K\!-\!M\!-\!1\right)\varepsilon_{2}}
+ \frac{{{P_{M1}}}}{\left(K - M - 1\right)(\upsilon
_{M+1}+\delta_{2})}\le P_{\rm{max}},\label{cons2}
\end{align}
where
\begin{align}
&U_{{C_{\left\{ {\left( {t_3^l,{P_{M1}},{P_{C1}}} \right),\forall l
\in \left\{ {1,2, \ldots ,L} \right\}} \right\}}}}^* \nonumber\\&=
\left[ {{{\hat R}_{C1}} - {{\hat R}_{C2}}} \right]\sum\limits_{l \in
\left\{ {1,2, \ldots ,L} \right\}} {{q_l}} \tilde t_1^l - \hat
R\sum\limits_{l \in \left\{ {1,2, \ldots ,L} \right\}} {{q_l}} t_3^l
+ {T_0}\hat R,
\end{align}
\begin{align}
\hat R = {\hat R_{C2}} - {\hat R_{C3}},\thinspace
\tilde t_1^1 = \frac{{{u_L} - t_3^1\hat R_{M3}^1}}{{{{\hat R}_{M1}}}},
\end{align}
\begin{align}
&\tilde t_1^l = \frac{1}{{{{\hat R}_{M1}}}}\left[ {{u_L} - t_3^1\hat
R_{M3}^1 + \sum\limits_{i = 2}^l {\hat R_{M3}^i\left( {t_3^{i - 1} -
t_3^i} \right)} } \right], \nonumber\\&\forall l \in \left\{ {2,
\ldots ,L} \right\}.
\end{align}

Note that the rate-based utility optimization in \eqref{contract}
with constraints \eqref{cons1}--\eqref{cons2} cannot always be
solved in closed-form, which makes it difficult to design the
optimal contract directly. Nevertheless, once the pdf of each random
variable is determined, the optimal contract can be achieved
according to the following Algorithm 2.

\begin{table}
\center \caption{Algorithm 2: one-dimensional search for
$P_{M1}^{*}$ under imperfect CSI}
\begin{tabular}{p{3.4in}@{}}
\hline
\textbf{Input:} $P_{M1}^{(1)}$, $P_{M1}^{(2)}$, where $P_{M1}^{(1)},
P_{M2}^{(2)}\in[0,
(K\!-\!M\!-\!1)(\upsilon_{M+1}+\delta_{2})P_{\rm{max}}]$ satisfy
$P_{M1}^{(2)}>P_{M1}^{(1)}$,\\ \quad $\varsigma_{1}=\frac{\partial
{U}_{C}^{*}} {\partial P_{M1}}\big|_{P_{M1}=P_{M1}^{(1)}}<0$, and
$\varsigma_{2}=\frac{\partial {U}_{C}^{*}}
{\partial P_{M1}}\big|_{P_{M1}=P_{M1}^{(2)}}>0$.\\
\textbf{Initialize:} $U_{C}^{*(1)}$, $U_{C}^{*(2)}$ by substituting $P_{M1}^{(1)}$ and $P_{M1}^{(2)}$ into \eqref{U}, respectively,\\
\enspace $k=1$, error $\epsilon\!>\!0$.\\
\thinspace \textbf{Repeat until convergence:} \\
\enspace Calculate:\\
\enspace
$s=\frac{3[U_{C}^{*(2)}-U_{C}^{*(1)}]}{P_{M1}^{(2)}-P_{M1}^{(1)}}$,\thinspace$z=s-\varsigma_{1}-\varsigma_{2}$,
\thinspace$w=\sqrt{z^{2}-\varsigma_{1}\varsigma_{2}}$,\thinspace$\bar{P}_{M1}=P_{M1}^{(1)}+(P_{M1}^{(2)}
-P_{M1}^{(1)})\left(1-\frac{\varsigma_{2}+w+z}
{\varsigma_{2}-\varsigma_{1}+2w}\right)$.\\
\enspace \textbf{If} $|P_{M1}^{(2)}-P_{M1}^{(1)}|\leq\epsilon$\\
\quad Termination criterion is satisfied.\\
\enspace \textbf{Else}\\
\quad Calculate $\bar{U}_{C}$ and $\varsigma=
\frac{\partial U_{C}^{*}}{\partial P_{M1}}\big|_{P_{M1}=\bar{P}_{M1}}$ by substituting $\bar{P}_{M1}$ into \eqref{U}.\\
\enspace\textbf{If} $\varsigma=0$\\
\quad Termination criterion is satisfied.\\
\enspace\textbf{Else}\\
\quad\textbf{If} $\varsigma<0$\\
\qquad$P_{M1}^{(1)}=\bar{P}_{M1}$, $U_{C}^{*(1)}=\bar{U}_{C}$ and $\varsigma_{1}=\varsigma$,\\
\quad\textbf{If} $\varsigma>0$\\
\qquad$P_{M1}^{(2)}=\bar{P}_{M1}$, $U_{C}^{*(2)}=\bar{U}_{C}$ and $\varsigma_{2}=\varsigma$,\\
\enspace$k\!=\!k\!+\!1$.\\\hline
\end{tabular}
\end{table}

\textit{Remark}: The computational complexity of Algorithm 2 is
similar to that of Algorithm 1.

\subsection{Discussion with the Sufficiently Large $K$}

In this subsection, we consider the case in which the number of RRHs
is large. It is assumed that each radio channel coefficient has a
complex Gaussian distribution with mean zero. According to the Law
of Large Numbers, we have the following approximations:
\begin{align}\label{large}
\frac{1}{K}\hat{\mathbf{G}}\hat{\mathbf{G}}^{H}\!\rightarrow\!\mathbf{D}\!+\!\delta_{1}\mathbf{I}_{M},
\thinspace\frac{1}{K}\hat{\mathbf{F}}\hat{\mathbf{F}}^{H}\!\rightarrow\!\mathrm{diag}\left[
\mathbf{D}\!+\!\delta_{1}\mathbf{I}_{M}, \left(\upsilon_{M+1}\!+\!\delta_{2}\right)\right].
\end{align}

\begin{definition}
When the number of RRHs is sufficiently large, the transmission data
rates of all RUEs and the MUE at three phases under imperfect CSI
can be rewritten as
\begin{align}
\tilde{R}_{C1}\!&=\!M\log\left(1\!+\!\frac{P_{C1}}{\frac{\delta_{1}P_{C1}}
{K\varepsilon_{2}}\!+\!\frac{\delta_{2}P_{M}}
{K\left(\upsilon_{M+1}\!+\!\delta_{2}\right)}\!+\!\sigma_{n}^{2}}\right),\label{RC1}\\
\tilde{R}_{M1}\!&=\!\log\left(1\!+\!\frac{P_{M1}}{\frac{\delta_{1}P_{C1}}
{K\varepsilon_{2}}\!+\!\frac{\delta_{2}P_{M1}}
{K\left(\upsilon_{M+1}\!+\!\delta_{2}\right)}\!+\!\sigma_{n}^{2}}\right),\\
\tilde{R}_{C2}\!&=\!M\log\left(1\!+\!\frac{P_{C2}}{\frac{\delta_{1}P_{C2}}
{K\varepsilon_{2}}\!+\!\sigma_{n}^{2}}\right),\\
\tilde{R}_{C3}\!&=\!\sum_{m\!=\!1}^{M}\log\left(1\!+\!\frac{P_{C3}}{\frac{\delta_{1}P_{C3}}
{K\varepsilon_{2}}\!+\!|f_{Bm}|^{2}P_{B}\!+\!\sigma_{n}^{2}}\right),\\
\tilde{R}_{M3}\!&=\!\log\left(1\!+\!\frac{|g_{B}|^{2}P_{B}}{\frac{\upsilon_{M+1}P_{C3}}{K\varepsilon_{2}
}\!+\!\sigma_{n}^{2}}\right)\label{RMM3}.
\end{align}
\end{definition}

Eqs. \eqref{RC1}--\eqref{RMM3} in \textbf{Definition 4} can be
derived directly by substituting \eqref{large} into the transmission
data rates in \textbf{Lemma 2}. Based on \textbf{Definition 4}, the
optimal contract design can be derived via the following lemma when
the number of RRHs is sufficiently large.

\begin{lemma}
For a sufficiently large $K$ under imperfect CSI, once the contract
is accepted by the agent, the optimal $t_{3}^{l}$, $P_{C1}^{*}$, and
$P_{M1}^{*}$ follow
\begin{align}
&t_{3}^{l}\!=\!0,l=1,\ldots,L-1,\\
&t_{3}^{L}\!=\!
\begin{cases}
T_{0}& {{\tilde R}_{M1}}({{\tilde R}_{C2}} - {{\tilde R}_{C3}}) < \tilde R_{M3}^L({{\tilde R}_{C2}} - {{\tilde R}_{C1}}),\\
[0,T_{0}]&{{\tilde R}_{M1}}({{\tilde R}_{C2}} - {{\tilde R}_{C3}}) = \tilde R_{M3}^L({{\tilde R}_{C2}} - {{\tilde R}_{C1}}),\\
0&{{\tilde R}_{M1}}({{\tilde R}_{C2}} - {{\tilde R}_{C3}}) > \tilde R_{M3}^L({{\tilde R}_{C2}} - {{\tilde R}_{C1}}),
\end{cases}\\
&\frac{P_{C1}^{*}}{\left(K\!-\!M\!-\!1\right)\varepsilon_{2}}
\!+\!\frac{P_{M1}^{*}}{\left(K\!-\!M\!-\!1\right)(\upsilon_{M\!+\!1
}\!+\!\delta_{2})}\!=\!P_{\mathrm{max}}.
\end{align}
\end{lemma}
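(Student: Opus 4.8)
The plan is to carry the large-$K$ rate expressions of \textbf{Definition 4} into the transformed contract problem \eqref{contract}--\eqref{cons2} and to exploit the fact that, once the powers are fixed, the program is \emph{affine} in the Phase-III durations $\{t_3^l\}$. First I would substitute \eqref{RC1}--\eqref{RMM3}, observing that in the limit the training-induced interference terms $\delta_1 P_{C1}/(K\varepsilon_2)$, $\upsilon_{M+1}P_{C3}/(K\varepsilon_2)$, etc.\ vanish, so $\tilde R_{C1}$, $\tilde R_{M1}$, $\tilde R_{C2}$ reduce to clean single-argument logarithms of the received powers and $\tilde R_{M3}^l$ becomes strictly increasing in the type index $l$ (since $\xi_1<\cdots<\xi_L$). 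Treating $(P_{C1},P_{M1})$ — hence $\tilde R_{C1},\tilde R_{M1}$ — together with $\tilde R_{C2},\tilde R_{C3},\{\tilde R_{M3}^l\}$ as constants, the objective in \eqref{contract} is affine in $\{t_3^l\}$ through the closed-form $\tilde t_1^l$, and the feasible set is the polytope cut out by the monotonicity chain \eqref{cons1} and the per-type bounds $T_0-\tilde t_1^l-t_3^l\ge0$.

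Second, I would reparametrize with the nonnegative increments $s_l=t_3^l-t_3^{l-1}\ge0$ (with $t_3^0=0$), so that \eqref{cons1} becomes simply $s_l\ge0$. Writing $t_3^l=\sum_{i\le l}s_i$ and collecting terms, the coefficient of $s_l$ in the objective is the suffix sum $B_l=\sum_{k\ge l}\partial U_C^*/\partial t_3^k$. A short telescoping computation — using $\tilde R_{M1}\tilde t_1^l=u_L+\sum_{k<l}(\tilde R_{M3}^{k+1}-\tilde R_{M3}^{k})t_3^k-\tilde R_{M3}^l t_3^l$ and swapping the order of summation against the type probabilities $q_l$ — collapses this to
$$
B_l=\frac{Q_{l-1}}{\tilde R_{M1}}\Big[(\tilde R_{C2}-\tilde R_{C1})\tilde R_{M3}^l-\tilde R_{M1}(\tilde R_{C2}-\tilde R_{C3})\Big],\qquad Q_{l-1}=\sum_{i=l}^{L}q_i .
$$
The sign facts I rely on are $\tilde R_{C2}>\tilde R_{C1}$ (Phase II devotes the whole RRH budget and an extra spatial degree of freedom to the $M$ RUEs, whereas Phase I must also feed the MUE) and $\tilde R_{C2}>\tilde R_{C3}$ (Phase III suffers MBS-to-RUE interference absent in Phase II). Since $\tilde R_{M3}^l$ is increasing in $l$, the bracket — and hence $B_l$ — is monotonically increasing in $l$, so the top type $L$ carries the largest marginal value of Phase-III time.

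Third, I would argue that this monotone sensitivity, together with the per-type bounds (which, after substituting $u_L=T_0\tilde R_{M3}^L$, act as an aggregate cap on the weighted Phase-III allocation), forces the optimum to load all Phase-III time onto the highest type: $s_l=0$, i.e.\ $t_3^l=0$, for $l<L$, while $t_3^L=s_L$ survives. For that scalar the coefficient is $B_L=(q_L/\tilde R_{M1})\big[(\tilde R_{C2}-\tilde R_{C1})\tilde R_{M3}^L-\tilde R_{M1}(\tilde R_{C2}-\tilde R_{C3})\big]$, whose sign is that of $\tilde R_{M3}^L(\tilde R_{C2}-\tilde R_{C1})-\tilde R_{M1}(\tilde R_{C2}-\tilde R_{C3})$; pushing $t_3^L$ to $T_0$ when this is positive, to $0$ when negative, and leaving it free in $[0,T_0]$ at equality reproduces exactly the three-way characterization in the statement. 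I expect this third step to be the main obstacle: the increasing sign pattern of $B_l$ immediately kills positive increments for the low types, but one must still verify that the coupled constraints $T_0-\tilde t_1^l-t_3^l\ge0$ and the chain \eqref{cons1} cannot render a spread-out allocation simultaneously feasible and better — i.e.\ that among the vertices of the polytope only $t_3=\mathbf{0}$ and $t_3=(0,\dots,0,T_0)$ can be optimal.

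Finally, for the power equality I would invoke a monotonicity / complementary-slackness argument, paralleling the KKT reasoning used for Lemma 1 in Appendix A but now carried out over the $L$-type screening problem. In the large-$K$ limit $\tilde R_{C1}$ is strictly increasing in $P_{C1}$, while the factor $\tilde R_{C2}-\tilde R_{C1}$ multiplies the nonnegative quantity $\sum_l q_l\tilde t_1^l$; hence if the budget \eqref{cons2} were slack, one could raise $(P_{C1},P_{M1})$ along its boundary and strictly increase the objective, contradicting optimality. Therefore the constraint is active, yielding $\tfrac{P_{C1}^{*}}{(K-M-1)\varepsilon_2}+\tfrac{P_{M1}^{*}}{(K-M-1)(\upsilon_{M+1}+\delta_2)}=P_{\mathrm{max}}$.
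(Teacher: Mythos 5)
Your reformulation is sound up to a point: with the powers fixed, the transformed problem \eqref{contract}--\eqref{cons2} is indeed linear in $\{t_3^l\}$, the increment substitution $s_l=t_3^l-t_3^{l-1}$ correctly encodes the chain \eqref{cons1}, and your telescoped coefficient $B_l=\frac{Q_{l-1}}{\tilde R_{M1}}\bigl[(\tilde R_{C2}-\tilde R_{C1})\tilde R_{M3}^l-\tilde R_{M1}(\tilde R_{C2}-\tilde R_{C3})\bigr]$ checks out; for $l=L$ its sign condition is exactly the bracket on which the paper's Appendix C splits its three cases. Your closing argument for the power equality is also essentially the paper's: Appendix C forms the Lagrangian ($L_1$, and $L_2$ when $t_3^L>0$), notes that the derivative of the objective part with respect to $P_{C1}$ is positive, concludes the multiplier is positive, and obtains the equality by complementary slackness.

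The gap is in your third step, and it is not a formality. First, the claim that ``the bracket --- and hence $B_l$ --- is monotonically increasing in $l$'' is false: the bracket increases in $l$, but $B_l$ carries the factor $Q_{l-1}=\sum_{i\ge l}q_i$, which decreases in $l$. When several brackets are positive and $q_L$ is small, one gets $B_{L-1}>B_L>0$, so the assertion that the top type carries the largest marginal value of Phase-III time --- the engine of your vertex argument --- breaks down. Second, and consequently, the dichotomy ``only $t_3=\mathbf{0}$ or $(0,\dots,0,T_0)$ can be optimal'' does not follow from the LP structure you set up: take $L=2$, both brackets positive, $q_2$ small, and $\tilde R_{M1}$ large relative to $\tilde R_{M3}^1<\tilde R_{M3}^2$; then the pooled point $t_3^1=t_3^2=t$, with $t$ pushed to the cap imposed by $T_0-\tilde t_1^l-t_3^l\ge 0$, is feasible under the chain and yields gain $B_1 t$, which can strictly exceed the gain $B_2 T_0$ of the claimed optimum $(0,T_0)$. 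So the step you flagged as ``to be verified'' is precisely where the work lies, and it cannot be supplied by the sign pattern of the $B_l$ alone. The paper does not take an LP/vertex route: it writes the KKT system in $t_3^L$ with multipliers $\omega_1$ (for $t_3^L\ge 0$) and $\omega_2$ (for $T_0-\tilde t_1^L-t_3^L\ge 0$), splits on the sign of the bracket at $l=L$, and then eliminates the lower types by the chain constraint when the bracket is negative, by an explicit derivative computation in $t_3^{L-1}$ in the equality case, and by an IR/acceptability argument when the bracket is positive (where $\omega_2>0$ forces $\tilde t_1^L+t_3^L=T_0$). To complete your proof you would have to reproduce the substance of that case analysis --- in particular, a reason excluding pooled allocations with $t_3^l>0$ for $l<L$ --- rather than rely on the monotone-coefficient shortcut.
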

\begin{proof}
See Appendix C.
\end{proof}

\section{Numerical Results}

In this section, Monte Carlo simulations are provided to evaluate
the performances of the proposed contract-based interference
coordination framework in H-CRANs numerically. All the fast fading
channel coefficients are generated as independent complex Gaussian
random variables with zero means and unit variances. The large-scale
fading coefficients $\upsilon_{m}$ ($m=1,\ldots,(M+1)$) are modeled
as $\upsilon_{m}\!=\!z_{m}/(r_{m}/r_{0})^{v}$, where $z_{m}$ is
log-normal random variable with standard deviation
$8\thinspace\textup{dB}$, $r_{m}$ is the distance between the $m$-th
UE and RRHs, $r_{0}$ is set to be $100$ meters, and $v\!=\!3.8$ is
the path loss exponent. The other common parameters are set at
$P_{\mathrm{max}}\!=\!1$, $T_{0}\!=\!1$, $M=4$ and $N\!=\!10$; thus
the signal-to-noise ratio (SNR) is given by
$\textup{SNR}\!=\!\frac{P_{\rm{max}}}{\sigma_{n}^{2}}
\!=\!\frac{1}{\sigma_{n}^{2}}$. Totally $10^{5}$ Monte-Carlo runs
are used for the average calculations. Two baselines are considered
in this paper:

\begin{itemize}
\item \emph{Frequency reuse with power control} (FRPC): RRHs and the MBS reuse the same RB, and the transmit powers are determined
according to the fairness power control adopted for the proposed
CICF in Phase III:
    \begin{align}\nonumber
    \{P_{C3},{P_B}\}\!&=\!\arg \mathop {\max }\limits_{P_{C3},P_B} \min \left\{ {{R_{C3}},{R_{M3}}} \right\},\\
    s.t.\quad &0\!\leq\!P_{C3}\!\leq\!P_{\rm{max}},\thinspace 0\!\leq\!P_{B}\!\leq\!P_{\rm{max}}.\nonumber
    \end{align}
\item \textit{Time domain interference cancelation} (TDIC): RRHs and the MBS transmit separately in the different time phases, where the transmission duration allocated to RRHs and the MBS are $t_{R}$ and $t_{M}$, respectively, and they satisfy the relationship with $t_{R}\!=\!t_{M}\!=\!\frac{T_{0}}{2}$.
\end{itemize}

\subsection{Performance Evaluations under Perfect CSI}

In Fig. \ref{simu1}, the sum rates of all RUEs versus SNRs under
perfect CSI are compared among the CICF, FRPC and TDIC. According to
these simulation curves, the proposed CICF achieves a larger sum
rate than the two baselines do, which indicates that the BBU pool as
the principal can benefit from the proposed contact model, thus
demonstrating the effectiveness of the proposed CICF for all RUEs.
Due to gains from the optimized time duration of the three phases
and the allocated transmission power of RRHs and the MBS, the sum
rate of all RUEs increases almost linearly with SNRs, whereas the
baseline FRPC cannot provide the same performance gain as the CICF
because the intr-tier interference is still severe even though the
fairness power control is utilized. For the baseline TDIC, the sum
rate is the lowest because its spectral efficiency is low even
though the inter-tier interference is avoided.

\begin{figure}[!h!t]
\center
 \includegraphics[width=0.45\textwidth]{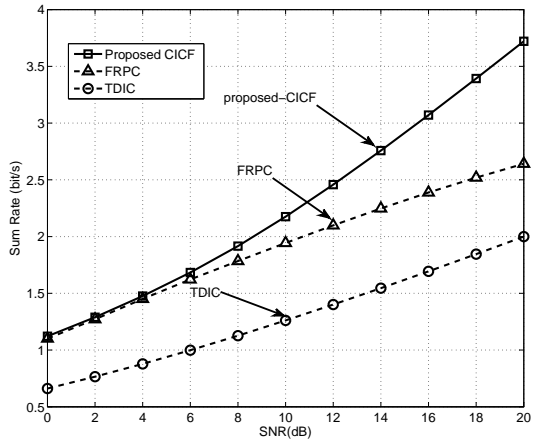}
 \caption{Sum rates versus SNRs for all RUEs with different schemes.} \label{simu1}
\end{figure}

To evaluate performance gains of the agent, the data rates of the
MUE versus SNRs are compared among the CICF, FRPC and TDIC in Fig.
\ref{simu2}. It can be observed that the proposed CICF achieves the
equivalent data rate for the MUE as the FRPC, which is not
unexpected since the equality of the IR constraint would always hold
when the BBU pool can acquire the complete and perfect CSI.
Comparing with the baseline TDIC, the data-rate based utility gain
for the MUE is significant for the proposed CICF because only
partial spectral resources are utilized in the TDIC.

\begin{figure}[!h!t]
\center
 \includegraphics[width=0.45\textwidth]{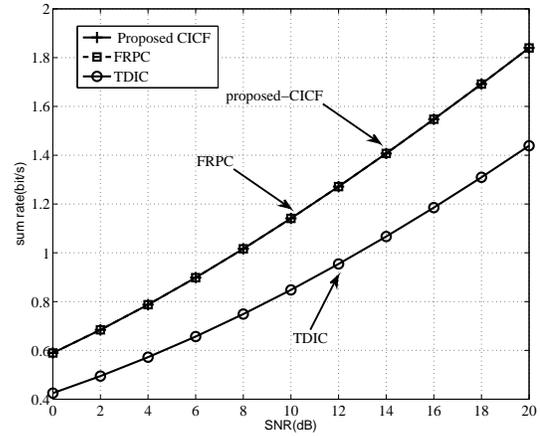}
 \caption{Data rates versus SNRs for MUEs with different schemes.} \label{simu2}
\end{figure}

Summarizing the sum rates of all RUEs and MUE in Fig. \ref{simu1}
and Fig. \ref{simu2}, respectively, the rate-based utility gain from
the proposed CICF is significant, which demonstrating the
effectiveness of the proposed contract-based solution under perfect
CSI.

\subsection{Performance Evaluations under Partial CSI}

In Fig. \ref{simu3}, the sum rates of all RUEs versus SNRs under
imperfect CSI for the proposed CICF and baselines (i.e., FRPC and
TDIC) are evaluated and compared. Similarly to the results under
perfect CSI, the proposed CICF can achieve a significant sum rate
gain compared with the baselines due to the fact that both the time
durations of the three phases and the allocated power are optimized.
Moreover, comparing the performance gain gap from the CICF to FRPC
in Fig. \ref{simu1} and Fig. \ref{simu3}, the benefit of the
principal under imperfect CSI is lower than that under perfect CSI.
This happens because the derived optimal contract under imperfect
CSI cannot always ensure the equality of all IR constraints.
Furthermore, the performances of FRPC and TDIC under perfect CSI is
better than that under imperfect CSI due to the fact that the
channel estimation error degrades the precoding performance at RRHs.

\begin{figure}[!h!t]
\center
 \includegraphics[width=0.45\textwidth]{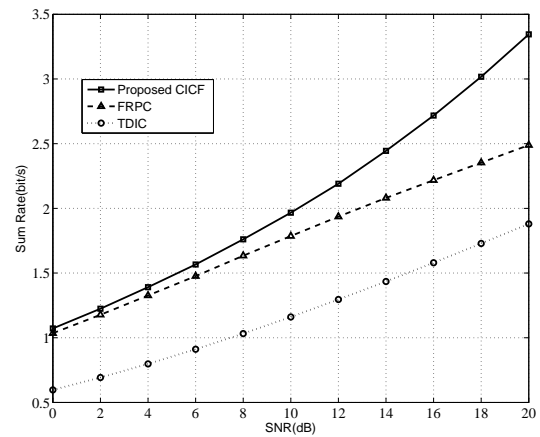}
 \caption{Sum rates versus SNRs for RUEs with different schemes.} \label{simu3}
\end{figure}

In Fig. \ref{simu4}, data rates of the MUE versus SNRs under
imperfect CSI are compared among the CICF, FRPC and TDIC. Comparing
with FRPC, the agent can achieve a significant performance gain in
the proposed CICF because the optimal contract under imperfect CSI
cannot ensure the equality of all IR constraints. Meanwhile, the
performance of FRPC in Fig. \ref{simu4} is worse than that in Fig.
\ref{simu2} because the fairness power control depends critically on
the instantaneous CSI. While TDIC in Fig. \ref{simu4} is the same as
that in Fig. \ref{simu2} since the CSI has no impact on the downlink
transmission of the MBS with this baseline.

\begin{figure}[!h!t]
\center
 \includegraphics[width=0.45\textwidth]{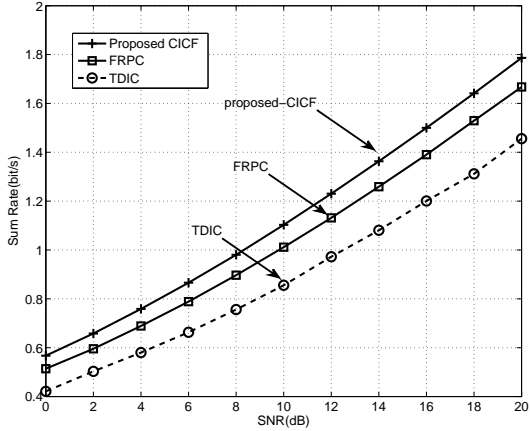}
 \caption{Data rates versus SNRs for MUEs with different schemes.} \label{simu4}
\end{figure}

To characterize the optimal contract in the proposed CICF, the
cumulative density function (cdf) of both $t_{1}^{*}$ and
$t_{2}^{*}$ are plotted in Fig. \ref{simu6}. It can be seen that
$P\!\left(t_{1}^{*}\!\geq\!0.5T_{0}\right)\!>\!0.7$ and
$P\!\left(t_{2}^{*}\!\leq\!0.5T_{0}\right)\!>\!0.8$, which indicates
that the time duration of Phase I should be longer than that of
Phase II to optimize the sum date rates. To optimize the
transmission data rates of the whole H-CRAN, more time resources
should be assigned to allow MUEs to be served by RRHs, and thus an
example of the transmission frame structure under these scheduling
algorithms and configurations in this paper is suggested such that
the time duration of Phase I is about 70 percent of the TTI length,
and the time duration of Phase II is about 30 percent of the TTI
length. Note that the transmission duration of Phase III would be
zero when the contract is signed, which is indicated in both Lemma 1
and Lemma 2.

\begin{figure}[!h!t]
\center
 \includegraphics[width=0.45\textwidth]{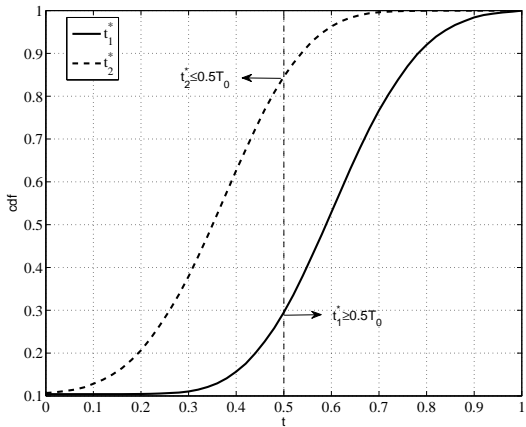}
 \caption{The cdf distributions of $t_{1}^{*}$ and $t_{2}^{*}$.} \label{simu6}
\end{figure}

\section{Conclusions}

In this paper, a contract-based interference coordination framework
has been proposed to mitigate inter-tier interference in H-CRANs,
whose core idea is that the BBU pool is selected as the principal
that would present a contract to the MBS, and the MBS as the agent
decides whether to accept the contract or not according to its
potential for improving the SE performance. An optimal contract
design that maximizes the overall rate-based utility has been
derived when perfect CSI is available at both principal and agent.
Furthermore, the contract optimization under the situation where the
partial CSI has been obtained from channel estimation has been
considered as well. Monte Carlo simulations have been provided to
corroborate the analysis, and simulation results have shown that the
proposed framework can significantly increase the sum data rates,
thus demonstrating the effectiveness of the proposed contract-based
framework. Interesting topics for further research include, more
advanced inter-tier interference coordination schemes to enhance the
\emph{RRH-MBS with UEs-separated} phase in the presented framework,
and the inclusion of fronthaul constraints within this framework.

\appendices
\section{Proof of Lemma 1}

For the fixed $t_{1}$ and $t_{2}$, the optimal $P_{C1}$ and $P_{M1}$
can be derived by using the Karush-Kuhn-Tucker (KKT) conditions. The
corresponding Lagrangian function can be written as
\begin{align}
L&(P_{C1}, P_{M1}, \lambda)\!=\!-t_{1}R_{C1}
\nonumber\\&\!-\!\lambda\left[P_{\mathrm{max}}\!-\!\frac{P_{C1}}{(K\!-\!M\!-\!1)\varepsilon_{1}}\!-\!\frac{P_{M1}}
{(K\!-\!M\!-\!1)\upsilon_{M\!+\!1}}\right],
\end{align}
whose solution is derived as
\begin{align}
&\nabla L(P_{C1}, P_{M1}, \lambda)\!=-t_{1}\frac{\partial R_{C1}}{\partial P_{C1}}\!
+\!\lambda\frac{1}{(K\!-\!M\!-\!1)\varepsilon_{1}}\!=\!0,\\
&\lambda\left[P_{\rm{max}}\!-\!\frac{P_{C1}}
{(K\!-\!M\!-\!1)\varepsilon_{1}}\!-\!\frac{P_{M1}}{(K\!-\!M\!-\!1)
\upsilon_{M\!+\!1}}\right]=0,\\
&\lambda\geq 0.
\end{align}

It is obvious that $\lambda$ must be positive to achieve the optimal
solution for $P_{C1}$, and thus the optimal received powers
$P_{C1}^{*}$ and $P_{M1}^{*}$ allocated to RUEs and MUE should
satisfy
\begin{align}
\frac{P_{C1}^{*}}{(K\!-\!M\!-\!1)\varepsilon_{1}}\!+\!\frac{P_{M1}^{*}}
{(K\!-\!M\!-\!1)\upsilon_{M\!+\!1}}\!=\!P_{\mathrm{max}}.
\end{align}

For given $P_{C1}$ and $P_{M1}$, the optimal $t_{1}$ and $t_{2}$ can
be obtained by solving following equations:
\begin{align}
&R_{C2}\!-\!R_{C1}\!+\!v\!-\!v_{1}\!-\!wR_{M1}\!=\!0,\\
&R_{C2}\!-\!R_{C3}\!+\!v\!-\!v_{2}\!-\!wR_{M3}\!=\!0,\\
&v_{1}t_{1}\!=\!0,\thinspace v_{2}t_{2}\!=\!0,\thinspace v(T\!-\!t_{1}\!-\!t_{2})\!=\!0,\\
&\varpi(t_{1}R_{M1}\!+\!t_{3}R_{M3}\!-\!u)\!=\!0,\\
&v_{1}\!\geq\!0,\thinspace v_{2}\!\geq\!0,\thinspace
v\!\geq\!0,\thinspace \varpi\!\geq\!0.
\end{align}
where $v$, $\varpi$, $v_{1}$, and $v_{2}$ are the Lagrangian multipliers.
It is obvious that $\varpi$ must be larger than zero. Therefore, the IR
constraint is derived. Moreover, in order to maximize the
transmission bit rate of the MUE, the optimal $t_{3}$ should be
equal to zero.

\section{Proof of Lemma 2}

Denoting the $m$-th column of $\mathbf{G}^{T}$ and $\mathbf{A}$ by
$\mathbf{g}_{m}$ and $\mathbf{a}_{m}$, respectively, the
transmission data rate can be derived as \cite{31}
\begin{align}
&\hat{R}_{m}\!=\!\nonumber\\&\log\!\left(1\!+\!\frac{|\mathcal{E}[\mathbf{g}_{m}\mathbf{a}_{m}]|^{2}}
{\sigma_{n}^{2}\!+\!\textup{var}\{\mathbf{g}_{m}\mathbf{a}_{m}\}\!+\!|f_{Bm}|^{2}P_{B}\!
+\!\sum_{i\neq m}^{M}
\mathcal{E}[|\mathbf{g}_{m}\mathbf{a}_{i}|^{2}]}\right).
\end{align}

For the ZF precoder, we have $\mathbf{\hat{g}}_{m}\mathbf{a}_{i}\! =
\!\delta_{mi}$. Denote the radio channel matrix by $\mathcal{G}\! =
\!\hat{\mathbf{G}}\! + \!\tilde{\mathbf{G}}$, where
$\tilde{\mathbf{G}}$ is the estimation error, and we have
\begin{align}\label{app_b_1}
\mathbf{g}_{m}\mathbf{a}_{i}\! =
\!\hat{\mathbf{g}}_{m}\mathbf{a}_{i}
+\!\tilde{\mathbf{g}}_{m}\mathbf{a}_{i} =
\!\delta_{mi}\!+\!\tilde{\mathbf{g}}_{m}\mathbf{a}_{i}.
\end{align}

According to (\ref{app_b_1}), the expectation and variance can be
written as
\begin{align}
\mathcal{E}[\mathbf{g}_{m}\mathbf{a}_{i}]\!&=\!P_{C1},
\thinspace \mathcal{E}[|\mathbf{g}_{m}\mathbf{a}_{i}|^{2}]\!=\!\delta_{1}P_{C1}
\sum_{i\neq m}^{M}\left[\mathbf{G}\mathbf{G}^{H}\right]^{-1}_{ii},\\
\textup{var}\{\mathbf{g}_{m}\mathbf{a}_{i}\}\!&=\!
\mathcal{E}\left[\mathbf{a}_{m}^{H}\mathbf{g}_{m}^{H}\mathbf{g}_{m}
\mathbf{a}_{m}\right]\!=\!\delta_{1}P_{C1}\left[\mathbf{G}\mathbf{G}^{H}\right]^{-1}_{mm}.
\end{align}

Thus, the sum data rate of all RUEs in the downlink transmission
during Phase III is given by
\begin{align}
\hat{R}_{C3}\!=\!\sum_{m\!=\!1}^{M}\!\log\!\!\left(1\!+\!\frac{P_{C3}}{\delta_{1}P_{C3}\mathrm{tr}\!
\bigg\{\left[\hat{\mathbf{G}}\hat{\mathbf{G}}^{H}\right]^{-1}\bigg\}\!
+\!|f_{Bm}|^{2}P_{B}\!+\!\sigma_{n}^{2}}\right),
\end{align}
and the data rate of the MUE can be written as
\begin{align}
\hat{R}_{M3}\!&=\!\log\left(1\!+\!\frac{|g_{B}|^{2}P_{B}}{\sigma_{n}^{2}\!
+\!|\mathbf{fA}|^{2}}\right)
\nonumber\\&=\!\log\left(1\!+\!\frac{|g_{B}|^{2}P_{B}}{P_{C3}\mathbf{f}\hat{\mathbf{G}}^{H}
\left(\hat{\mathbf{G}}\hat{\mathbf{G}}^{H}\right)^{-2}\hat{\mathbf{G}}\mathbf{f}^{H}
\!+\!\sigma_{n}^{2}}\right).
\end{align}

Similarly, the sum rates during the Phase I and II can be calculated
in the same way.

\section{Proof of Lemma 3}

By using the KKT conditions, the optimal $\tilde{t}_{3}^{L}$ can be
obtained by solving the following equations:
\begin{align}
&-(\tilde{R}_{C2}\!-\!\tilde{R}_{C1})q_{L}\frac{\tilde{R}_{M3}^{L}}
{\tilde{R}_{M1}}\nonumber\\&\quad
\quad\!+\!(\tilde{R}_{C2}\!-\!\tilde{R}_{C3})q_{L}\!-\!\omega_{1}
\!+\!\omega_{2}(1\!-\!\frac{\tilde{R}_{M3}^{L}}{\tilde{R}_{M1}})=0,\label{app_3_1}\\
&\omega_{1}t_{3}^{L}\!=\!0,\thinspace\omega_{2}(T_{0}
\!-\!\tilde{t}_{1}^{L}\!-\!t_{3}^{L})\!=\!0,\label{app_3_2}\\
&\omega_{1}\!\geq\!0,\thinspace\omega_{2}\!\geq\!0.\label{app_3_3}
\end{align}

Based on (\ref{app_3_1}), (\ref{app_3_2}), and (\ref{app_3_3}), we
have the following results:
\begin{itemize}
\item When $\tilde{R}_{M1}(\tilde{R}_{C2}-\tilde{R}_{C3})\!> \!\tilde{R}_{M3}^{L}(\tilde{R}_{C2}\!-\!\tilde{R}_{C1})$, $\omega_{1}$ must be positive, i.e., $t_{3}^{L}\!=\!0$. As $t_3^l$ ($l = 1, \ldots ,L - 1$) is often less than $t_{3}^{L}$, $t_3^l$ is equal to zero.
\item When $\tilde{R}_{M1}(\tilde{R}_{C2}-\tilde{R}_{C3})\!=\!\tilde{R}_{M3}^{L}(\tilde{R}_{C2}\!-\!\tilde{R}_{C1})$, $t_{3}^{L}$ can be an arbitrary value in the interval $[0,T_{0}]$. The derivative of $-U_{C}^{*}$ with respect to $t_{3}^{L\!-\!1}$ can be written
as
    \begin{align}
    (\tilde{R}_{C2}\!-\!\tilde{R}_{C1})\left[q_{L}\!+\!q
    _{L\!-\!1}\right](\tilde{R}_{M3}^{L}-\tilde{R}_{M3}^{L-1})>0.
    \end{align}
    To maximize $U_{C}^{*}$, we have $t_{3}^{L\!-\!1}\!=\!0$, and all $t_3^l$ ($l < L - 1$) should be equal to zero.
\item When $\tilde{R}_{M1}(\tilde{R}_{C2}-\tilde{R}_{C3})\!< \!\tilde{R}_{M3}^{L}(\tilde{R}_{C2}\! -\!\tilde{R}_{C1})$, $\omega_{2}$ must be positive. Meanwhile, the following equation holds:
    \begin{align}
    T_{0}\!=\!\tilde{t}_{1}^{L}\!+\!t_{3}^{L}.
    \end{align}

    Obviously, this contract is unacceptable for the agent. Therefore,
    $t_{3}^{L}$ should be equal to $T_{0}$. In order to satisfy the IR constraint, $t_3^l$ ($l = 1, \ldots ,L - 1$) should be set as zero.
\end{itemize}

As for the optimal powers assigned to RRHs and MBS in Phase I, when
$t_3^l$ $(l = 1, \ldots ,L - 1)$ is equal to zero, $U_{C}^{*}$ can
be expressed as
\begin{align}
U_{C}^{*}\!=\!\frac{(\tilde{R}_{C1}\!-\!\tilde{R}_{C2})u_{L}}{\tilde{R}_{M1}}\!+\!T_{0}(\tilde{R}_{C3}\!-\!\tilde{R}_{C2}).
\end{align}

The Lagrangian function can be written as
\begin{align}
L_{1}&(P_{C1}, P_{M1},
\lambda_{1})\!=\!-\underbrace{\frac{(\tilde{R}_{C1}\!-\!\tilde{R}_{C2})u_{L}}{\tilde{R}_{M1}}}_{f_{1}(P_{C1},P_{M1})}
\nonumber\\&\!-\!\lambda_{1}\left[P_{\mathrm{max}}\!-\!\frac{P_{C1}}{(K\!-\!M\!-\!1)\varepsilon_{2}}\!-\!\frac{P_{M1}}
{(K\!-\!M\!-\!1)(\upsilon_{M\!+\!1}+\delta_{2})}\right].
\end{align}

The derivative of the Lagrangian function with respect to $P_{C1}$
is
\begin{align}
&\nabla L_{1}(P_{C1}, P_{M1},
\lambda_{1})\!=\nonumber\\&\quad\quad-\frac{\partial
f_{1}(P_{C1},P_{M1})} {\partial P_{C1}}\!
+\!\lambda_{1}\frac{1}{(K\!-\!M\!-\!1)\varepsilon_{2}}\!=\!0,\\
&\lambda_{1}\left[P_{max}\!-\!\frac{P_{C1}}
{(K\!-\!M\!-\!1)\varepsilon_{2}}\!-\!\frac{P_{M1}}{(K\!-\!M\!-\!1)
(\upsilon_{M\!+\!1}+\delta_{2})}\right]=0.\\
&\lambda_{1}\geq 0.
\end{align}

Since the derivative of $f_{1}(P_{C1},P_{M1})$ with respect to
$P_{C1}$ is positive, $\lambda^{*}$ must be positive, and the
optimal powers $P_{C1}^{*}$ and $P_{M1}^{*}$ should satisfy
\begin{align}\label{power1}
\frac{P_{C1}^{*}}{(K\!-\!M\!-\!1)\varepsilon_{2}}\!+\!\frac{P_{M1}^{*}}
{(K\!-\!M\!-\!1)(\upsilon_{M\!+\!1}\!+\!\delta_{2})}\!=\!P_{\mathrm{max}}.
\end{align}

If $t_{3}^{L}$ is positive, $U_{C}^{*}$ can be re-written as
\begin{align}
U_{C}^{*}\!&=\!(\tilde{R}_{C1}\!-\!\tilde{R}_{C2})[1-q_{L}]\frac{u_{L}}{\tilde{R}_{M1}}\!
\nonumber\\&+\!(\tilde{R}_{C1}\!-\!\tilde{R}_{C2})q_{L}\frac{u_{L}\!-\!\tilde{R}_{M3}^{L}}{\tilde{R}_{M1}}
+\!T_{0}(\tilde{R}_{C3}\!-\!\tilde{R}_{C2})[1-q_{L}].
\end{align}

Similarly, the Lagrangian function is re-formulated as
\begin{align}
&L_{2}(P_{C1}, P_{M1},
\lambda_{2})\!=\!\nonumber\\&-\underbrace{(\tilde{R}_{C1}\!-\!\tilde{R}_{C2})[1-q_{L}]\frac{u_{L}}
{\tilde{R}_{M1}}\!+\!(\tilde{R}_{C1}\!-\!\tilde{R}_{C2})q_{L}\frac{u_{L}
\!-\!\tilde{R}_{M3}^{L}}{\tilde{R}_{M1}}}_{f_{2}(P_{C1},P_{M1})}
\!\\
&-\!\lambda_{2}\left[P_{\mathrm{max}}\!-\!\frac{P_{C1}}{(K\!-\!M\!-\!1)\varepsilon_{2}}\!-\!\frac{P_{M1}}
{(K\!-\!M\!-\!1)(\upsilon_{M\!+\!1}+\delta_{2})}\right].
\end{align}

As the derivative of $f_{2}(P_{C1},P_{M1})$ with respect to $P_{C1}$
is also positive, the optimal powers allocated to $P_{C1}^{*}$ and
$P_{M1}^{*}$ are the same as in \eqref{power1}.

\begin{IEEEbiography}{Mugen Peng}
(M'05--SM'11) received the B.E. degree in Electronics Engineering
from Nanjing University of Posts \& Telecommunications, China in
2000 and a PhD degree in Communication and Information System from
the Beijing University of Posts \& Telecommunications (BUPT), China
in 2005. After the PhD graduation, he joined in BUPT, and has become
a full professor with the school of information and communication
engineering in BUPT since Oct. 2012. During 2014, he is also an
academic visiting fellow in Princeton University, USA. He is leading
a research group focusing on wireless transmission and networking
technologies in the Key Laboratory of Universal Wireless
Communications (Ministry of Education) at BUPT. His main research
areas include wireless communication theory, radio signal processing
and convex optimizations, with particular interests in cooperative
communication, radio network coding, self-organizing network,
heterogeneous network, and cloud communication. He has
authored/coauthored over 50 refereed IEEE journal papers and over
200 conference proceeding papers.

Dr. Peng is currently on the Editorial/Associate Editorial Board of
the \emph{IEEE Communications Magazine}, the IEEE Access, the
\emph{IET Communications}, the \emph{International Journal of
Antennas and Propagation} (IJAP), the \emph{China Communications},
and the \emph{International Journal of Communications System}
(IJCS). He has been the guest leading editor for the special issues
in the \emph{IEEE Wireless Communications}. Dr. Peng was a recipient
of the 2014 IEEE ComSoc AP Outstanding Young Researcher Award, and
the best paper award in GameNets 2014, CIT 2014, ICCTA 2011, IC-BNMT
2010, and IET CCWMC 2009. He received the First Grade Award of
Technological Invention Award in Ministry of Education of China for
his excellent research work on the hierarchical cooperative
communication theory and technologies, and the Second Grade Award of
Scientific and Technical Advancement from China Institute of
Communications for his excellent research work on the co-existence
of multi-radio access networks and the 3G spectrum management.
\end{IEEEbiography}

\begin{IEEEbiography}{Xinqian Xie}
received the B.S. degree in telecommunication engineering from the
Beijing University of Posts and Communications (BUPT), China, in
2010. He is currently pursuing the Ph.D. degree at BUPT. His
research interests include cooperative communications, estimation
and detection theory.
\end{IEEEbiography}

\begin{IEEEbiography}{Qiang Hu}
received the B.S. degree in applied physics from the Beijing
University of Posts \text{\&} Communications (BUPT), China, in 2013.
He is currently pursuing the Master degree at BUPT. His research
interests include cooperative communications, such as cloud radio
access networks (C-RANs), and statistical signal processing in
large-scale networks.
\end{IEEEbiography}

\begin{IEEEbiography}{Jie
Zhang} (M'02) is a full professor and has held the Chair in Wireless
Systems at the Electronic and Electrical Engineering Dept.,
University of Sheffield since 2011. He received PhD in 1995 and
became a Lecturer, Reader and Professor in 2002, 2005 and 2006
respectively. He and his students pioneered research in femto/small
cell and HetNets and published some of the most widely cited
publications in these topics. He co-founded RANPLAN Wireless Network
Design Ltd. (www.ranplan.co.uk) that produces a suite of world
leading in-building distributed antenna system, indoor-outdoor small
cell/HetNet network design and optimization tools
``iBuildNet-Professional, Tablet and Cloud''. Since 2003, he has
been awarded over 20 projects by the EPSRC, the EC FP6/FP7/H2020 and
industry, including some of earliest research projects on
femtocell/HetNets.
\end{IEEEbiography}

\begin{IEEEbiography}{H. Vincent Poor}
(S'72, M'77, SM'82, F'87) received the Ph.D. degree in EECS from
Princeton University in 1977.  From 1977 until 1990, he was on the
faculty of the University of Illinois at Urbana-Champaign. Since
1990 he has been on the faculty at Princeton, where he is the
Michael Henry Strater University Professor of Electrical Engineering
and Dean of the School of Engineering and Applied Science. Dr.
Poor's research interests are in the areas of information theory,
statistical signal processing and stochastic analysis, and their
applications in wireless networks and related fields including
social networks and smart grid. Among his publications in these
areas are the recent books \textit{Principles of Cognitive Radio}
(Cambridge University Press, 2013) and \textit{Mechanisms and Games
for Dynamic Spectrum Allocation} (Cambridge University Press, 2014).

Dr. Poor is a member of the National Academy of Engineering, the
National Academy of Sciences, and is a foreign member of Academia
Europaea and the Royal Society. He is also a fellow of the American
Academy of Arts and Sciences, the Royal Academy of Engineering (U.
K), and the Royal Society of Edinburgh. He received the Marconi and
Armstrong Awards of the IEEE Communications Society in 2007 and
2009, respectively. Recent recognition of his work includes the 2014
URSI Booker Gold Medal, and honorary doctorates from Aalborg
University, Aalto University, the Hong Kong University of Science
and Technology, and the University of Edinburgh.

\end{IEEEbiography}

\end{document}